\newtheorem{proposition}{Proposition}
\newtheorem{proof}{Proof}
\newtheorem{lemma}{Lemma}
\DeclareMathOperator{\e}{e}
\DeclareMathSymbol{-}{\mathalpha}{operators}{`\-}
\DeclareMathOperator*{\argmax}{\arg\!\max}
\begin{document}


\title{Strategic Availability and Cost Effective UAV-based Flying Access Networks: S-Modular Game Analysis}

\author{\IEEEauthorblockN{Sara Handouf and Essaid Sabir }
\IEEEauthorblockA{NEST Research Group, LRI Lab., ENSEM, Hassan II University of Casablanca, Morocco. \\
sara.handouf@ensem.ac.ma, e.sabir@ensem.ac.ma}
}

\maketitle

\begin{abstract}
Nowadays, ubiquitous network access has become a reality thanks to Unmanned Aerial Vehicles (UAVs) that have gained extreme popularity due to their flexible deployment and higher chance of Line-of-Sight (LoS) links to ground users. Telecommunication service providers deploy UAVs to provide flying network access in remote rural areas, disaster-affected areas or massive-attended events (sport venues, festivals, etc.) where full set-up to provide temporary wireless coverage would be very expensive. Of course, a UAV $i$s battery-powered which means limited energy budget for both mobility aspect and communication aspect. An efficient solution is to allow UAVs swhiching their radio modules to sleep mode in order to extend battery lifetime. This results in temporary unavailability of communication feature. Within such a situation, the ultimate deal for a UAV operator is to provide a cost effective service with acceptable availability. This would allow to meet some target Quality of Service (QoS) while having a good market share granting satisfactory benefits. In this article, we exhibit a new framework with many interesting insights on how to jointly define the availability and the access cost in UAV-empowered flying access networks to opportunistically cover a target geographical area. Yet, we construct a duopoly model to capture the adversarial behavior of service providers in terms of their pricing policies and their respective availability probabilities. Optimal periodic beaconing (small messages advertising existence of a UAV) is a vital issue that needs to be addressed, given the UAVs limited battery capacity and their recharging constraints. A full analysis of the game outcome, both in terms of equilibrium pricing and equilibrium availability, is derived. We show that the availability-pricing game exhibits some nice features as it is sub-modular with respect to the availability policy, whereas it is super-modular with respect to the service fee. Furthermore, we implement a learning scheme using best-response dynamics that allows operators to learn their joint pricing-availability strategies in a fast, accurate and distributed fashion. Extensive simulations show convergence of the  proposed scheme to the joint pricing-availability equilibrium and offer promising insights on how the game parameters should be chosen to efficiently control the duopoly game.\\

Index Terms:\;\;\;\;UAV, Coverage probability; Service probability; Cournot Duopoly; Non-cooperative Game; Pricing Game; Availability Game; Sub-Modular Game; Super-Modular Game; Nash Equilibrium; Best Response Dynamics.
\end{abstract}
\section{Introduction}\label{sect:introduction}
 
In the recent few years, Unmanned Aerial Vehicles (UAVs), or drones,  have attracted a lot of attention, 
since they present ubiquitous usability, high maneuverability and low cost deployment advantages. UAVs, equipped with navigation systems and smart sensors, are currently being deployed for surveillance operations, rescue missions and rapid on demand communication. Along with the maturity of the UAVs  technology and relevant regulations,  they have become an important  market where a UAVs  worldwide deployment is expected. For instance, the registered number of UAVs in use in the U.S. exceed 200,000 just in the first 20 days of January 2016, after the Federal Aviation Administration (FAA) started requiring owners to sign up \cite{ding2017amateur}. Indeed, UAVs  can be  deployed  fruitfully to support cellular communication systems when  terrestrial infrastructure networks are damaged. Moreover it supports wireless communication in exceptional scenarios such as hard to reach rural areas, festivals or sporting events and emergency situations where the terrestrial base stations installation may be too expensive. In this context, UAVs as flying base stations have been reported as a promising approach that can boost the capacity and coverage performance of existing wireless networks. using UAVs as aerial base stations provides several benefits such as high maneuverability and robustness, flexible deployment, efficient on demand telecommunication and also mobility \cite{mozaffari2017wireless}.\\

Thanks to their high mobility and air location, UAVs are likely to have good communication channel since the UAV-ground link is more likely to have line-of-sight (LoS) links \cite{amorim2017radio}.  Furthermore, UAVs  provide unlimited telecommunication applications for their advantage to efficiently sense information about surrounding environment. They are now a key technology for intelligence, recognition, search, inspection tasks, surveillance,  public services,  and so on. Another key feature  of  UAVs is the Internet of Things (IoT) Which is  a technological revolution that has made the leap from conceptual to actual (see \cite{sharma2017saca} and \cite{mahmoud2015integrating}). IoT enables devices  to exchange data and inter-operate within the internet infrastructure, which affords ubiquitous connectivity while reducing the transmission cost and providing extended range for low- power communication. Technically, UAVs  play an important role in the Internet of Things vision.  It offers flexible deployability and re-programming in mission possibilities  to deliver numerous operating solutions and services, for IoT  partners.\\

While deploying drones as flying base stations offers numerous advantages, still a number of economic and technical challenges arise. These challenges comprise not only technical and physical issues, but also effective pricing and availability management features. Yet, for UAVs service providers to overcome these relevant challenging issues, availability or beaconing  scheduling becomes a very important but largely unexplored topic. For this, comprehensive modeling and performance analysis of UAVs setups has become extremely attractive. In this article we provide a UAVs joint pricing and availability problem, considering their limited battery capacity. We focus on the scheduling of availability periods as a key enabler that can potentially lead to better energy efficiency and then, to a better long-term availability (time-life) as well as a satisfactory quality of service. More precisely, we are interested in the joint Pricing-Availability problem for UAV-based network market. Then, we design a new scheme that jointly considers and solves the UAV's pricing and energy-efficient issue. \\

In order to address this issues, we examine the availability and pricing interactions between service providers as a non-cooperative game. We propose a duopoly setup, where a finite network of mobile unmanned aerial vehicles are deployed according to a homogeneous Poisson Point Processes (PPP), and serving a number of ground IoT devices. The UAVs are moving according to a Random Waypoint (RWP) model. As a first proposal, we derive the coverage probability expression as well as the service probability for each UAV $i$n the current scenario. In order to achieve the maximum system performance in terms of pricing policy and energy efficiency, our proposal introduces Nash equilibrium analysis. Furthermore, we propose a learning automata to derive the joint Price-Availability Equilibrium. Finally, we provide extensive numerical simulations to highlight the importance of taking Price - Availability as a joint decision parameters and provide thereby important insights/heuristics on how to set them.\\

The remainder of this article is organized as follows. In section II we give an overview of the proposed UAVs duopoly system model and strategic Pricing-Availability. In section III we present the Availability game analysis, existence and uniqueness of NE solution  and also define the game sub-modularity characteristics. Furthermore Pricing game is provided in section IV, the property of  super-modularity was discussed mathematically as well analytically. Section V considers a joint Availability-Pricing approach with numerical learning implementations, we further gave the impact on some important parameters on he learning process.  finally section VI concludes the paper.

\section{Related work} 

A significant existing literature investigates interesting features on UAVs technological performance and reducing cost. For instance, authors in \cite{zeng2017energy} study energy-efficient UAV communication  via optimizing the UAV's trajectory, for this, they consider a  UAV trajectory with general constraints, under which the UAV's flight radius and speed are jointly optimized. An important study of the UAV's obstacle avoidance plan  is conducted in \cite{li2017study}, where the writers propose algorithm based on Iterative Regional Inflation by Semidefinite in order to improve the fighting efficiency.  A  useful  derivative-free one, sliding mode control (SMC) theory-based learning algorithm,  for the control and guidance of a UAV $i$s proposed in \cite{kayacan2017learning}. Robustness, in real-time applications and finite time convergence  were proven trough intensive simulations. An other learning approach is treated in \cite{carrio2017review} where the authors provide a thorough review on deep learning methods for UAVs,  including the most relevant developments as well as their performances and limitation. An important issue which is IoT applications is discussed \cite{mozaffari2017mobile}. This paper investigate the efficient deployment and mobility of multiple unmanned aerial vehicles (UAVs), in order to enable reliable uplink communications for the IoT devices with a minimum total transmit power. Another IoT related work is in \cite{gharibi2016internet}, it presents the features that a UAV system should implement for drone traffic management, namely the air traffic control network, the cellular network, and the Internet. In \cite{kawamoto2014internet} the authors introduce different classifications of Internet of Things with examples of utilizing IoT technologies. They refer to the concept of global-scaled IoT, where using Unmanned Aerial Vehicle (UAV) offers a large coverage and more flexibility thanks to the mobility approach. The UAV based IoT concept has been further studied in paper \cite{motlagh2016low} that provides a comprehensive survey on the UAVs and the related issues. It highlights potential for the delivery of UAV-based value-added IoT service from height.

\section{Availability-Pricing Game Among UAV Operators} 

\subsection{The Flying Access Network Framework}
\subsubsection{Coverage probability}
Consider a circular geographical area with a radius R within which a number $N$ of wireless users are deployed according to a homogeneous Poisson Point Processes (PPP) with density of λu (number of users per m$^2$).
In this area, a set of UAVs moving randomly according to a Random Waypoint mobility model, are used as aerial mobile base stations to provide wireless service for the ground IoT devices (see figure 1). The drones are belonging to different operators and are engaged to provide an effective coverage for mobile IoT users. In this setting the drones have the same characteristics such as altitude h, total available bandwidth and maximum transmit power. 
\begin{figure}
\centering{
\includegraphics [width=9cm,  clip]{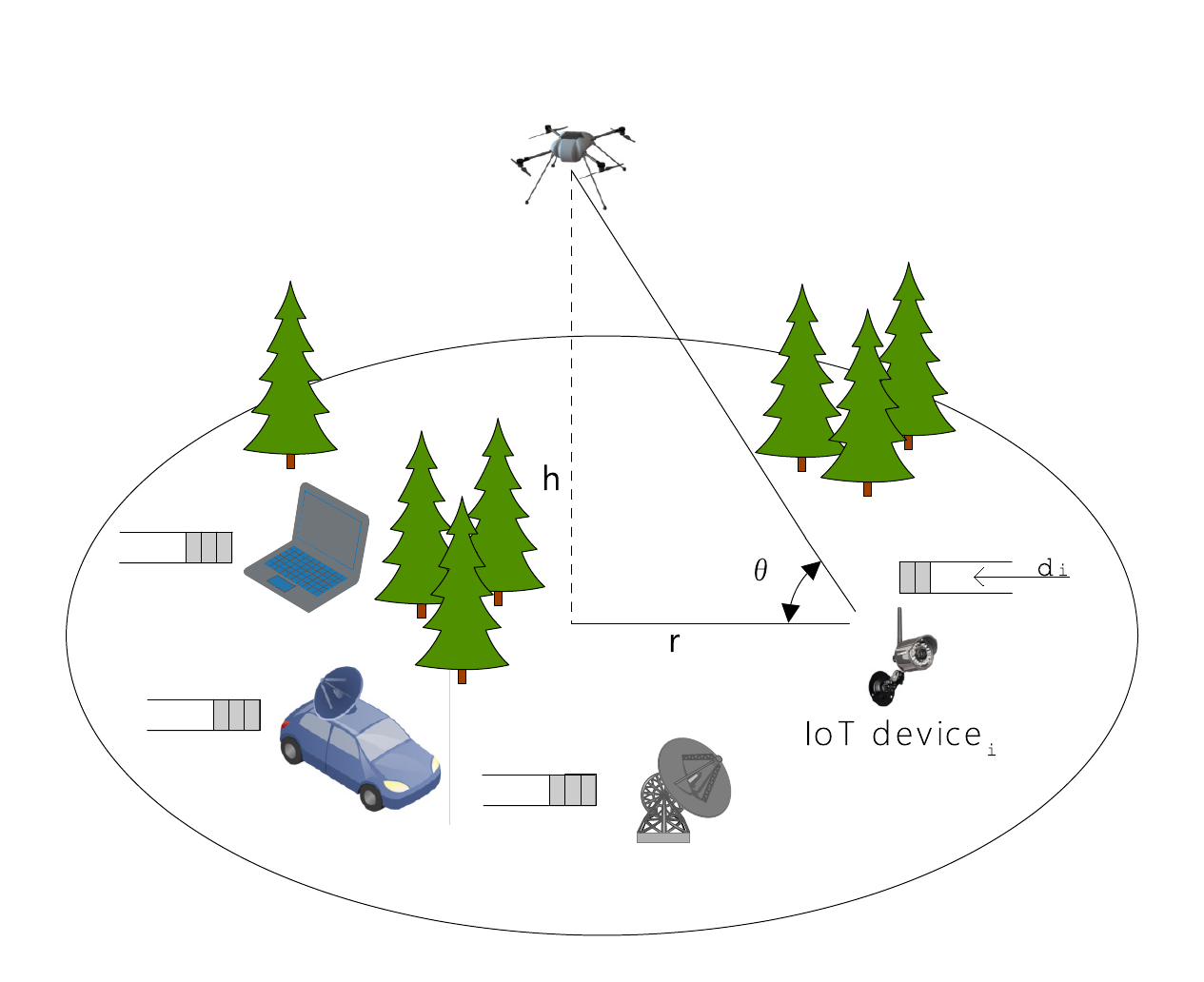}
\caption{Using UAV for Ubiquitous Network Access in IoT Environments.}}
\end{figure}

The coverage probability is formally defined as:
\begin{equation}
P_{u}^{cov}\left(\gamma_{u}\right)= \mathcal{P}\left(\gamma_{u} \geq \beta\right)
\end{equation}

Here $\gamma_{u}=\frac{P_{u}}{N+I}$ is the $SINR$ expression for a device user that connects to the UAV, where $P_{u}$ is the received signal power from the UAV $i$ncluding fading and path-loss, $N$ is the noise power and $I$  is the total interference power from existing transmitters. $\beta$ refers to the $SINR$ threshold.

According to \cite{mozaffari2016unmanned}, for an up-link user located at $(r, \phi)$, the coverage probability is given by:

\begin{equation}
\begin{split}
P_{u}^{cov}(r,\phi,\beta)=&P_{LoS}(r, \phi) \mathds{1} \left[ r \leq \left(\frac{P_{u}}{\beta N}\right)\right]+\\
&P_{NLoS}(r,\phi) \mathds{1} \left[ r \leq \left(\frac{P_{u}}{\beta N}\right)\right].
\end{split}
\end{equation}
Here $r$ and $\phi$ are the radius and angle in a polar coordinate system, the UAV $i$s located at the center of the area of interest.
$P_{LoS}$ respectively $P_{NLoS}$ are the line-of-sight, respectively Non-Line-of-Sight probabilities.

The average coverage probability is computed by taking the average of $P_{u}^{cov}(r,\phi,\beta)$ over the
area with the radius $R$:
\begin{equation}
\begin{split}
P_{u}^{cov}(r,\phi,\beta)= &\mathcal[E]_{r,\phi}[P_{u}^{cov}(r,\phi,\beta)]\\
&= \int_0^{min \left[\left( \frac{P_{u}}{\beta N}\right)^\frac{1}{\alpha_{u}},R\right]} \;\;P_{LoS}(r,\phi)\frac{2r}{R^2} \mbox{d}r\\
&+ \int_0^{min \left[\left( \frac{\eta P_{u}}{\beta N}\right)^\frac{1}{\alpha_{u}},R\right]} \;\;P_{NLoS}(r,\phi)\frac{2r}{R^2} \mbox{d}r,
\end{split}
\end{equation}

where  $\alpha_{u}$ is the path loss exponent over the user-UAV link. $\eta $  is an additional attenuation factor due to the NLoS connection.
$P_{u}$ is the UAV transmit power.\\

The air-to-ground signal propagation is almost affected  by obstacles in surrounding  environment. Thus   a   common  approach for channel modeling  between the UAV and down-link users is based on probabilistic direct Line of Sight (LoS) and Non Line of Sight (NLoS) links. Whereby each Link may occur with a specific probability which depends  on the propagation environment, the UAV and users locations, and the elevation angle (here $\theta$).

In this scenario we suppose designing a UAV-based communication system in a remote rural area where full set-up to provide temporary wireless coverage would be very expensive.  Due to its higher altitude and  the low average height of building in hard to reach rural areas, UAV has higher probability of direct  line-of-sight links to ground users. The probability of NLoS component occurrence  is therefore significantly lower than that of the LoS  component.  For simplicity  we ignore the small scale NLoS and as evident the coverage probability formula would be significantly simplified as bellow:
\begin{equation}
P_{u}^{cov}(r,\phi,\beta)=  \int_0^{min \left[\left( \frac{P_{u}}{\beta N}\right)^\frac{1}{\alpha_{u}},R\right]} \;\;P_{LoS}(r,\phi)\frac{2r}{R^2} \mbox{d}r.\\
\end{equation}

\subsubsection{Service probability}
Given the limited capacity and recharging difficulties of a UAV battery,  periodic beaconing is one of the most challenging problem that needs to be addressed. Each UAV $i$ is periodically sending beacons of duration $\tau_{i}$ advertising his presence to mobile IoT devices. The beacon/idle cycle is periodically repeated every time slot T during a time window: $m = l * T$ (see figure 2). UAVs  should define their beaconing periods strategically in order  to maximize their  encounter rate with the IoT ground users. However, they should avoid battery depletion resulting from maintaining useless beaconing in the absence of contact opportunities.
\begin{figure}
\centering{
\includegraphics [width=9cm,  height=4cm]{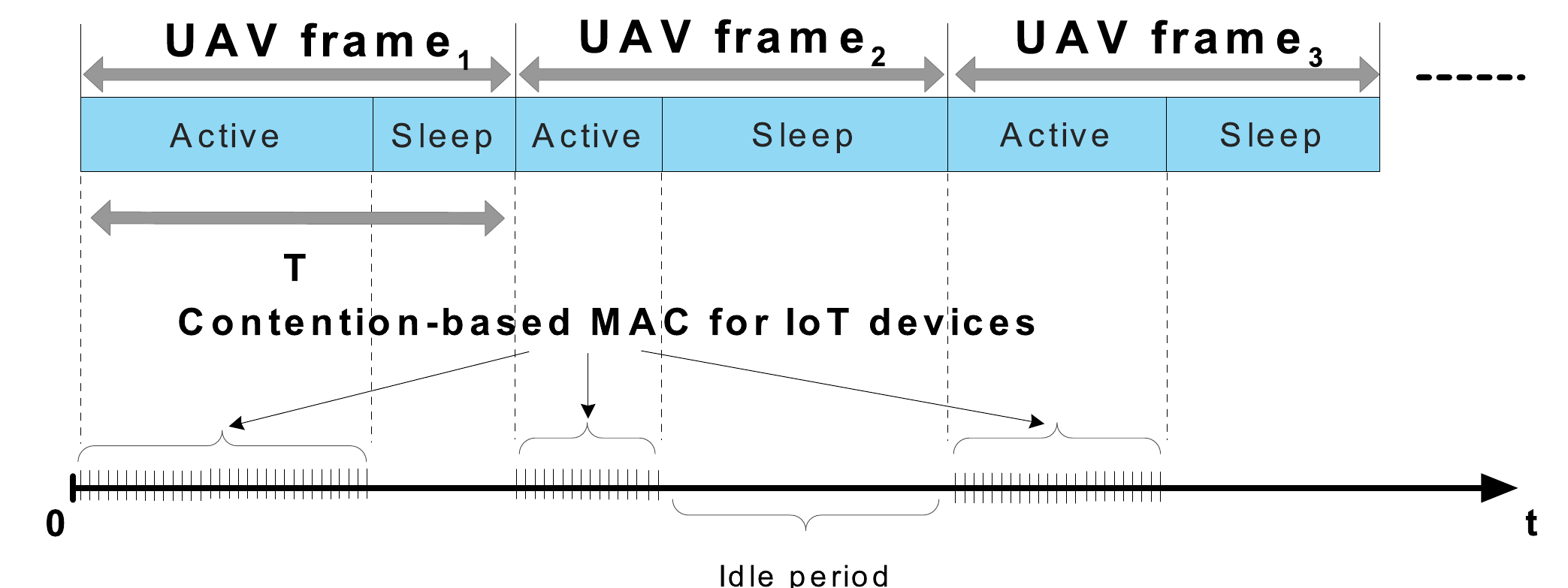}
\caption{beaconing cycle schedule }}\label{fig:zzz}
\end{figure}
The first encounter follows an exponential distribution with a random parameter $\lambda$. In order for a UAV $i$ to encounter first the ground IoT destination at time $\tau_{i}$, the following conditions must hold:
\begin{itemize}
\item The UAV $i$ has to be beaconing at $\tau_{i}$;
\item All potential encounters with ground destination, happening before time instant $\tau_{i}$, by the other UAVs need to be unsuccessful. In other words, the encounters need to happen while UAV $i$'s competitors are inactive).\\
\end{itemize}

Consequently, the successful contact probability  is given by:
\begin{equation}
P_{i}^{srv}=\left[P(T_{i} \leq T_{j})+P(T_{i} \geq T_{j}). P_{j}^{slp}\right].P_{i}^{bcn}.P_{i}^{cov}.
\end{equation}
For mathematical calculations, we focus on the case of two operators ($UAV_{i}$ and $UAV_{j}$), this minimizes mathematical complication but still allows us to analyze  the important features of Operators strategies.
We define the probability of $UAV_{i}$ beaconing while encountering for the first time the destination within [0; m]:
\begin{eqnarray}
P_{i}^{bcn} &=&\sum\limits_{s=0}^{l-1}\left(\int_{s*T}^{s*T+\tau_{i}} \lambda_{i} \e^{-\lambda_{i}x}dx\right)\nonumber\\
&=&-\frac{\e^{\lambda_{i}T}\left(\e^{-m\lambda_{i}}-\e^{-\lambda_{i}(m+\tau_{i})}-1+\e^{-\lambda_{i}\tau_{i}}\right)}{\e^{\lambda_{i}T}-1}.
\end{eqnarray}
For a $UAV_{i}$, the probability of being idle is given by:
\begin{eqnarray}
P_{i}^{slp}&=&\sum\limits_{s=0}^{l-1}\left(\int_{s*T+\tau_{i}}^{(s+1)*T} \lambda_{i} \e^{-\lambda_{i}x}dx\right) \nonumber\\
&=&\frac{\e^{\lambda_{i}T}\left(-\e^{-\lambda_{i}(m+\tau_{i})}+\e^{-\lambda_{i}(m+T)}+\e^{-\lambda_{i}\tau_{i}}- \e^{-\lambda_{i}T}\right)}{\e^{\lambda_{i}T}-1}.
\end{eqnarray}

The probability that $ UAV_{i}$ encounters first the ground destination without accounting for its
state (probing/idle) is expressed as follows:

\begin{equation}
 P(T_{i} \leq T_{j})=\frac{\lambda_{j}\e^{-m(\lambda_{i}+\lambda_{j})}+(-\lambda_{i}-\lambda_{j})\e^{-\lambda_{j}m}+\lambda_{i}}{\lambda_{i}+\lambda_{j}}
\end{equation}
And finally we define The probability that $ UAV_{i}$ encounters first the ground destination without accounting for its state :
\begin{equation}
P(T_{i} \geq T_{j})=\frac{\lambda_{i}\e^{-m(\lambda_{i}+\lambda_{j})}+(-\lambda_{i}-\lambda_{j})\e^{-\lambda_{i}m}+\lambda_{j}}{\lambda_{i}+\lambda_{j}}.
\end{equation}

\subsection{The Game Formulation}
\subsubsection{Customers behavior}
Most  telecommunications  markets are oligopolistic, this means, dominated by a certain number of service providers  called oligopolists. In order to maximize benefits, the competition between service providers is becoming increasingly tough.
From an oligopolist point of view, increasing  market share is the most important objective. Thus, service providers are supposed to define the optimal  pricing policy and the best QoS (represented by availability) requested  in order  to attract more and more customers. Because customers are  rational by nature, they are likely to register to an operator rather than the others or to stay at no subscription state according to their own satisfaction (servie price, Quality of Service, Quality of experience…).\\

In this paper, we assume a duopoly telecommunication market where just two Service Providers compete against one another to provide service for IoT ground  users. Each UAV $i$ belonging to SP $i$ picks its availability duration $\tau_i$ represented by the periodic beaconing time chosen within intervalle $[0, \;T]$,  and a service fee per data unit $f_{i}  \in [0,\;f_{max}]$. Formally, SPs are intelligent individuals in conflict, this situations is involving rational decisions. \\

Non cooperative Game theory provides the appropriate  tools for determining optimal behavior in such competitive environments. It provides mathematical methodology for modeling and analyzing interactions between intelligent rational players in interests conflict.
This  branch of game theory  deals largely with how intelligent individuals interact with one another in an effort to achieve their own goals. Players are selfish and each individual player makes decisions independently without forming alliances.\\

We capture the Pricing-Availability interaction among two UAV operators as a normal form game. Each SP seeks to choose the level of service allowing a reasonable battery lifetime while maximizing its monetary profits, given the action of the competitor. The non-cooperative game is formally defined as follows:
\begin{equation}
\mathcal{G}=\Big\langle\mathcal{N}, \; \mathcal{A},\; \mathcal{U}\Big\rangle.
\end{equation}
where:
\begin{itemize}
\item $\mathcal{N}=\{1, \; 2\}$ is the set of players (UAV operators);\vspace{0.2cm}
\item $\mathcal{A}=\Big\{[0,\;T]\times [0,\;f_{max}]\Big\}^2$;\vspace{0.2cm}
\item $\mathcal{U}=\left(U_1, U_2\right)$ is the vector of payoffs defined as the difference between their respective profits and costs.\\
\end{itemize}
Increasing the experienced market demand is the most important objective of each operator. Then, from an operator perspective, the question is to define the best beaconing duration and the best pricing policy that attracts a high portion of subscribers. Ideally, one UAV would like to simultaneously maximize the proposed fee  and   minimize the total consumed energy. However, these are two conflicting objectives, since ground users are likely to be rational. Inspired by Logit model \cite{jonard1999duopoly}, we model the probability $\pi_{i}$ that a given IoT device is registering  with service provider $i$ which can also be seen as the market share of a UAV $i$. The demand function follows an exponential distribution with parameter $\mu$. This last parameter is called the constant hazard or temperature, which defines the users rationality, so that, the  higher is $\mu$ , the less users are rational.
\begin{equation}
\pi_{i}=\frac{\exp\left(-\frac{f_{i}}{\mu}\right)}{\sum\limits_{j=1}^2 \exp\left(-\frac{f_{j}}{\mu}\right)}.
\end{equation}

The average number of customers that are registering to UAV $i$ is $N.\pi_{i}$. Because UAVs compete in quality of service and price, it seems to be obvious that the experienced demand  depends on these parameters. However, it  depends also on the environment variables.\\

\subsubsection{Service providers strategic decision}

We are interested in the problem of optimal Pricing-beaconing  in fixed-wing UAVs that should  avoid battery depletion resulting from maintaining useless activity duration in the absence of contact with the ground. 
 We propose a new design paradigm that considers both the  $UAV's$ Pricing and energy-efficient problems.\\

The UAV $i$s periodically sending beacons advertising his presence to mobile users on the ground.
From a UAV  perspective, there is a trade-offs between the strategic parameters $\tau$  and energy consumption. On one hand, as the beacon duration increases, the coverage chance $P_{cov}$ as well as the service probability increase. On the other hand, total energy consumption is mainly proportional to the activity period duration. The same for the service price, where increasing the pricing policy value may increase revenues  but in the same time negatively affect the experienced market demand. Here, the strategy space and the reward are common knowledge, however the chosen beaconing period and taxed fee are not since decisions are taken simultaneously.
The Utility of each UAV SP is a function of its own strategy as well as the decisions of the other. 
\begin{equation}
U_{i}(\tau_{i},\tau_{j},f_{i},f_{j})=\pi_{i}\sum\limits_{IoT_{i}\in \mathcal{R}_{i}}P_{i}^{srv}(\tau_{i},\tau_{j}).f_{i}-E_{i}.
\end{equation}

 $E_{i}$ is the dissipated energy for a $UAV_{i}$: 

\begin{equation}
\begin{split}
E_{i}=  &\left[ \left(1-\prod\limits_{i\in \mathcal{R}_{i}}\left(1-p_{u}\right)\right).\epsilon_{r}+\sum\limits_{u\in \mathcal{R}_{i}}\theta_{u,i}\epsilon_{Ack}\right].P_{i}^{srv}+\\
&\epsilon_{b}.\pi_{i}(\tau)+\epsilon_{s}.\\
\end{split}
\end{equation}
Here $\theta_{u,i}= p_{u} \prod\limits_{k \neq u}(1-p_{k})$ is the normalized throughput  of user $u$ served by a UAV $i$. We denote by $\epsilon_{b}$, the energy cost per slot for sending beacons, $\epsilon_{r}$ is the reception energy, $\epsilon_{Ack}$ is the transmission energy cost and $\epsilon_{s}$ is the unit energy for remaining swhiching the transceiver status. $\mathcal{R}_{i}=(User_{u}, d(User_{u},UAV_{i})\leq r)$ is the set of ground IoT devices served by a given drone $i$.
 \section{ Availability game with fixed fee }
In literature the pricing shelduding is almost the key parameter that affects straightly operators revenues. However the realism of this assumption is sometimes questioned \cite{baslam2011joint}. Indeed, similar market price may affect the customers perceptions about the UAVs' availability and then, their loyalty. Due to a rough competition, availability has started to become an important strategic tool for operators to expand their market share. We examine energy-efficient-Pricing trade-offs  within  a non-cooperative game framework. This incites operators  to make efforts in providing better QoS while fixing optimal periodic  advertising beacons. The proposed approach is  realistic enough to analyze the interplay of Availability on service providers revenues.
The problem is then, to determine for each UAV, the optimal availability (beaconing period) while rival behaves optimally, this represents exactly the Nash equilibrium status.\\

Recall  that  Nash equilibrium  strategy is the best response of each UAV to the expected best-response  behavior of the other.\\

\begin{proposition} The availability game admits at least one Nash Equilibrium.\\
\end{proposition}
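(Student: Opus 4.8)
The plan is to establish existence via the classical Debreu--Glicksberg--Fan existence theorem, i.e. to exhibit the availability game as a (quasi-)concave game on a compact convex action space and then invoke Kakutani's fixed point theorem applied to the best-response correspondence. Concretely, three properties must be verified: (i) each operator's action set is a nonempty, convex and compact subset of a Euclidean space; (ii) for the fees held fixed, each payoff $U_i(\tau_i,\tau_j)$ is continuous in the joint profile $(\tau_1,\tau_2)$; and (iii) for every fixed $\tau_j$ the map $\tau_i\mapsto U_i(\tau_i,\tau_j)$ is quasi-concave on $[0,T]$.

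Points (i) and (ii) are the easy part. The action set of each operator is the interval $[0,T]$, which is nonempty, convex and compact. For continuity, note that $P_i^{bcn}$, $P_i^{slp}$, $P(T_i\le T_j)$ and $P(T_i\ge T_j)$ are finite sums and compositions of exponentials, hence smooth; the coverage probability $P_i^{cov}$ is an integral of a continuous integrand over an interval whose endpoints depend continuously on the parameters, hence continuous; and $\pi_i$ depends smoothly on the (fixed) fees. Since $P_i^{srv}$ and $E_i$ are built from these quantities by sums and products, $U_i$ is continuous — indeed $C^1$ — on $[0,T]^2$.

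The heart of the argument is (iii). The useful observation is that $P_i^{bcn}$ collapses to the closed form $P_i^{bcn}=C_i\,(1-e^{-\lambda_i\tau_i})$ with $C_i=\frac{e^{\lambda_i T}\,(1-e^{-\lambda_i m})}{e^{\lambda_i T}-1}>0$ a constant not involving $\tau_i$; consequently $P_i^{srv}(\tau_i,\tau_j)=\kappa_i(\tau_j)\,(1-e^{-\lambda_i\tau_i})$, where $\kappa_i(\tau_j)=P_i^{cov}\big[P(T_i\le T_j)+P(T_i\ge T_j)\,P_j^{slp}\big]\ge 0$ does not depend on $\tau_i$ (the parameters $\lambda_i,\lambda_j$ being fixed). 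Hence $P_i^{srv}$ is an increasing and \emph{concave} function of $\tau_i$. Substituting this into $U_i$ and collecting terms, the revenue contribution $\pi_i f_i \sum_{IoT_i\in\mathcal{R}_i}P_i^{srv}$ is increasing and concave in $\tau_i$, while the energy term $E_i$ is affine in $P_i^{srv}$ up to the switching constant $\epsilon_s$ and the beaconing-overhead term (which one differentiates directly); one then checks $\partial^2 U_i/\partial\tau_i^2\le 0$ throughout $[0,T]$, so that $U_i(\cdot,\tau_j)$ is concave, a fortiori quasi-concave, in $\tau_i$. I expect this sign verification to be the main obstacle: it relies on the explicit form above and on controlling the sign of the net (revenue-minus-cost) coefficient multiplying the concave term $1-e^{-\lambda_i\tau_i}$. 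Should global concavity fail, the fallback is to show that the stationary equation $\partial U_i/\partial\tau_i=0$ has a unique root in $[0,T]$ which is a maximizer, i.e. that $U_i(\cdot,\tau_j)$ is single-peaked, which again yields quasi-concavity.

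With (i)--(iii) in hand the conclusion is standard. By Weierstrass the best-response correspondence $BR_i(\tau_j)=\argmax_{\tau_i\in[0,T]}U_i(\tau_i,\tau_j)$ is nonempty; by quasi-concavity of $U_i(\cdot,\tau_j)$ it is convex-valued; and by Berge's maximum theorem it is upper hemicontinuous with closed graph. The joint correspondence $BR=(BR_1,BR_2)$ thus maps the nonempty compact convex set $[0,T]^2$ into itself and meets the hypotheses of Kakutani's fixed point theorem, so it admits a fixed point $(\tau_1^{\star},\tau_2^{\star})$. Such a profile is, by definition, a Nash equilibrium of the availability game, which proves the claim.
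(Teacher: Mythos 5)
Your proposal is correct and follows essentially the same route as the paper: compact convex strategy set $[0,T]$, continuity of $U_i$, and quasi-concavity in $\tau_i$ established through the sign of $\partial^2 U_i/\partial\tau_i^2$, followed by the standard (Debreu--Glicksberg--Fan/Kakutani) existence argument that the paper leaves implicit. The only difference is one of explicitness: you derive the factorization $P_i^{srv}=\kappa_i(\tau_j)\left(1-e^{-\lambda_i\tau_i}\right)$ and flag the sign of the net revenue-minus-cost coefficient as the delicate point, whereas the paper simply asserts the concavity of $P_i^{srv}$ in $\tau_i$ and imposes the condition $N f_i\pi_i-\psi_i\ge 0$ as an economically motivated assumption.
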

\begin{proof}
To show the existence of an equilibrium, the sufficient condition is the quasi-concavity of the utility function. We notice that, the strategy set  $[0;T]$ is a convex, close and compact interval. Furthermore, the utility function is continue with respect to $\tau_{i}$.
In addition we show that the second derivative with $\tau_{i}$ is negative. After calculation, the second derivative is expressed as bellow:
\begin{equation}
\frac{\partial^2 U_{i}(\tau_{i},\tau_{j})}{\partial \tau_{i}^2}=\frac{\partial^2 P_{i}^{srv}(\tau_{i},\tau_{j})}{\partial^2 \tau_{i}}\left(N f_{i}\pi_{i}-\psi_{i}\right),
\end{equation}

where
\begin{equation}
\psi_{i}=\left[ \left(1-\prod\limits_{i\in \mathcal{R}_{i}}(1-p_{u})\right).\epsilon_{r}+\sum\limits_{u\in \mathcal{R}_{i}}\theta_{u,i}\epsilon_{Ack}\right].
\end{equation} 
 
It has been easily proven  that the second derivative of the service probability  with respect to $\tau_{i}$ is of negative sign, and  therefore  for the second derivative to be negative, it could be sufficient to make sure that the following condition fulfill
\begin{equation}
N f_{i}\pi_{i}-\psi_{i} \geq 0.
\end{equation}

This condition seems logical since the demand is positive and each operator requires a certain demand threshold  as an incentive to deploy a drone resource, this means that for a UAV operator to be motivated to coverage a certain area the
minimum required demand  equals  at least  the cost of energy that he is going to waste. 
From this, the utility function of a UAV $i$ belonging to an operator $i$ is quasi concave, and then admits a local optimum point at $\tau_{i}^*$ which is its best response against strategies of other  operators.\\
\end{proof}

\begin{proposition}[Uniqueness of Availability-termed Nash equilibrium] The availability game is dominance solvable \cite{elkington2009strict} and consequently, satisfies Rosen's diagonal strict concavity condition. Henceforth it has a unique Nash equilibrium.\\
\end{proposition}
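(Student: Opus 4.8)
The plan is to establish uniqueness of the Nash equilibrium through the two claims stated in the proposition: dominance solvability and Rosen's diagonal strict concavity (DSC). I would organize the argument in two essentially independent parts, and emphasize at the outset that either one alone already implies uniqueness, so the proposition is somewhat over-determined; the proof should make the logical dependencies transparent rather than conflating them.

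\textbf{Part 1: Rosen's DSC condition.} Following Rosen's classical result, it suffices to exhibit a vector of positive weights $r=(r_1,r_2)$ such that the symmetrized pseudo-Jacobian of the game is negative definite on the strategy space. Concretely, I would form the matrix $G(\tau)$ with diagonal entries $r_i\,\partial^2 U_i/\partial\tau_i^2$ and off-diagonal entries built from $\tfrac{1}{2}\big(r_i\,\partial^2 U_i/\partial\tau_i\partial\tau_j + r_j\,\partial^2 U_j/\partial\tau_j\partial\tau_i\big)$, and show $G(\tau)+G(\tau)^\top\prec 0$. The diagonal dominance route is cleanest here: from Proposition~1's proof we already know $\partial^2 U_i/\partial\tau_i^2 = \partial^2 P_i^{srv}/\partial\tau_i^2\,(Nf_i\pi_i-\psi_i) < 0$ under the standing demand-threshold assumption, so the diagonal is strictly negative and bounded away from zero on the compact box $[0,T]^2$. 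It then remains to bound the cross second derivatives $\partial^2 U_i/\partial\tau_i\partial\tau_j$, which reduce to $\partial^2 P_i^{srv}/\partial\tau_i\partial\tau_j$ times the same positive constant $(Nf_i\pi_i-\psi_i)$ since $\pi_i$ does not depend on $\tau$ when fees are fixed. Choosing $r_1=r_2=1$ and checking that at every point of the box the magnitude of the diagonal entry strictly exceeds that of the cross term — which is plausible because $P_i^{srv}$ depends on $\tau_j$ only weakly through $P_j^{slp}$ — yields strict diagonal dominance, hence negative definiteness, hence DSC, hence a unique equilibrium by Rosen's theorem.

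\textbf{Part 2: Dominance solvability.} Here I would argue that the best-response map $\tau_i\mapsto BR_i(\tau_j)$ is a contraction, or more elementarily that iterated elimination of strictly dominated strategies collapses the box to a single point. The key quantity is the slope $|\partial BR_i/\partial\tau_j|$, which by the implicit function theorem equals $|\,\partial^2 U_i/\partial\tau_i\partial\tau_j\,| \big/ |\,\partial^2 U_i/\partial\tau_i^2\,|$; this is exactly the diagonal-dominance ratio from Part~1, so a modulus-of-contraction bound strictly below $1$ follows from the same estimate. With both best responses being contractions (in the sup metric), the composition $BR_1\circ BR_2$ is a contraction on a complete space, and Banach's fixed point theorem — equivalently, the nested-interval argument underlying dominance solvability — forces a unique fixed point, which is the unique Nash equilibrium and coincides with the limit of the best-response dynamics used later in Section~V.

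The main obstacle I anticipate is the explicit sign and magnitude control of the mixed partial $\partial^2 P_i^{srv}/\partial\tau_i\partial\tau_j$. The service probability in~(5) couples the two beaconing times only through the term $P(T_i\geq T_j)\,P_j^{slp}$, and both of these are transcendental in $\tau_i,\tau_j$ (sums of exponentials over $l$ slots, as in~(7)--(10)); differentiating twice and then bounding the result uniformly over $[0,T]^2$ against the (also transcendental) diagonal term is the delicate computation. I would handle it by factoring out the common $(Nf_i\pi_i-\psi_i)$, reducing everything to $P_i^{srv}$, and then either (i) showing the cross partial is identically of one sign and dominated termwise by the diagonal second derivative for all admissible parameter ranges, or (ii) if a clean closed form is elusive, invoking the sufficient condition on the temperature $\mu$, the slot count $l$, and the encounter rates $\lambda_i$ under which the ratio stays below $1$ — stating this explicitly as the regime in which the uniqueness claim holds.
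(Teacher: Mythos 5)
Your proposal follows essentially the same route as the paper: both reduce uniqueness to the diagonal-dominance (dominance-solvability) condition $-\partial^2 U_{i}/\partial\tau_{i}^2 \ge \left|\partial^2 U_{i}/\partial\tau_{i}\partial\tau_{j}\right|$, factor the common term $(Nf_{i}\pi_{i}-\psi_{i})$ out of both second derivatives so that everything hinges on comparing $\partial^2 P_{i}^{srv}/\partial\tau_{i}^2$ with the mixed partial, and then invoke Rosen's diagonal strict concavity to conclude. Where you differ is in bookkeeping and candor: you treat Rosen's condition and dominance solvability as two logically independent sufficient conditions for uniqueness (the paper instead asserts that dominance solvability ``consequently'' yields Rosen's condition, which is a looser phrasing of the same diagonal-dominance estimate), and you make explicit the contraction reading of dominance solvability via the best-response slope $\left|\partial^2 U_{i}/\partial\tau_{i}\partial\tau_{j}\right| / \left|\partial^2 U_{i}/\partial\tau_{i}^2\right| < 1$, which also explains why the best-response dynamics of Section V converge.

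The one step your proposal leaves open --- the uniform bound $\left|\partial^2 P_{i}^{srv}/\partial\tau_{i}\partial\tau_{j}\right| \le \left|\partial^2 P_{i}^{srv}/\partial\tau_{i}^2\right|$ on $[0,T]^2$, which you correctly single out as the delicate transcendental computation --- is precisely the step the paper does not carry out either: after displaying the two second-derivative expressions it simply declares that the dominance-solvability conditions ``are verified.'' Your observation that the coupling enters only through $P(T_{i}\ge T_{j})\,P_{j}^{slp}$, so that the mixed partial is $P(T_{i}\ge T_{j})\,\bigl(\partial P_{j}^{slp}/\partial\tau_{j}\bigr)\bigl(\partial P_{i}^{bcn}/\partial\tau_{i}\bigr)P_{i}^{cov}$ times the same constant while the diagonal term carries $\bigl[P(T_{i}\le T_{j})+P(T_{i}\ge T_{j})P_{j}^{slp}\bigr]\bigl(\partial^2 P_{i}^{bcn}/\partial\tau_{i}^2\bigr)P_{i}^{cov}$, is the right starting point for closing it, but neither your argument nor the paper's establishes that this ratio stays below one for all admissible $\lambda_{i},\lambda_{j},T,l$. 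Your fallback of stating an explicit parameter regime under which the bound holds is therefore the honest way to finish, and it goes beyond what the paper itself provides.
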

\begin{proof}
The dominance solvability condition is given by:
\begin{equation}
-\frac{\partial^2 U_{i}(\tau_{i},\tau_{j})}{\partial \tau_{i}^2}-\left|\frac{\partial^2 U_{i}(\tau_{i},\tau_{j})}{\partial \tau_{j} \partial \tau_{i}}\right| \geq 0,
\end{equation}
when
\begin{equation}
\frac{\partial^2 U_{i}(\tau_{i},\tau_{j})}{\partial \tau_{i}^2}=\frac{\partial^2 P_{i}^{srv}(\tau_{i},\tau_{j})}{\partial^2 \tau_{i}}(N f_{i}\pi_{i}-\psi_{i}),
\end{equation}
and
\begin{equation}
\frac{\partial^2 U_{i}(\tau_{i},\tau_{j})}{\partial \tau_{j} \partial \tau_{i}}=\frac{\partial^2 P_{i}^{srv}(\tau_{i},\tau_{j})}{\partial \tau_{j} \partial \tau_{i}}(N f_{i}\pi_{i}-\psi_{i}).
\end{equation}
The  dominance solvability conditions are verified and consequently, the concave game $\mathcal{G}$  satisfies Rosen's diagonal strict concavity condition for the uniqueness of the Nash equilibrium.\\
\end{proof}

Interactions among UAVs in this  non-cooperative game have a very attractive property, if  one UAV reduces its availability (beaconing period), the other UAV also has an interest in decreasing its own. Informally, this type of games led us to conclude that it is about sub-modular games.\\

\begin{lemma} The game $\mathcal{G}$ with $N$ mobile devices and two UAVs service providers is sub-modular.\\
\end{lemma}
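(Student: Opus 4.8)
The plan is to show that the game $\mathcal{G}$ is sub-modular with respect to the availability profile $(\tau_i,\tau_j)$ by verifying the standard sufficient condition for sub-modularity: that the mixed second partial derivative of each operator's utility with respect to the two strategic variables is non-positive, i.e. $\frac{\partial^2 U_i(\tau_i,\tau_j)}{\partial \tau_i \partial \tau_j} \leq 0$ for $i \neq j$. Since the strategy spaces $[0,T]$ are compact intervals (totally ordered lattices) and the utilities are twice continuously differentiable, this single inequality is exactly what is needed to conclude that each best-response correspondence is non-increasing in the opponent's action, which is the defining feature of a sub-modular (decreasing best-response) game.

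First I would recall from the proof of Proposition~2 the expression
\begin{equation}
\frac{\partial^2 U_{i}(\tau_{i},\tau_{j})}{\partial \tau_{j} \partial \tau_{i}}=\frac{\partial^2 P_{i}^{srv}(\tau_{i},\tau_{j})}{\partial \tau_{j} \partial \tau_{i}}\left(N f_{i}\pi_{i}-\psi_{i}\right).
\end{equation}
By the demand-threshold condition $N f_i \pi_i - \psi_i \geq 0$ already assumed and justified in Proposition~1 (an operator only deploys a drone if the captured demand covers the energy cost), the sign of the cross-derivative of $U_i$ coincides with the sign of the cross-derivative of the service probability $P_i^{srv}$. So the whole problem reduces to showing $\frac{\partial^2 P_i^{srv}(\tau_i,\tau_j)}{\partial \tau_i \partial \tau_j} \leq 0$.

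Next I would unfold $P_i^{srv}$ using its defining formula $P_{i}^{srv}=\left[P(T_{i} \leq T_{j})+P(T_{i} \geq T_{j})\, P_{j}^{slp}\right] P_{i}^{bcn}\, P_{i}^{cov}$, and note that $P_i^{cov}$ does not depend on $\tau_j$, and $P_i^{bcn}$ depends only on $\tau_i$ (through $\lambda_i$), while $P(T_i \leq T_j)$, $P(T_i \geq T_j)$ depend on the encounter rates but not directly on the $\tau$'s, and the opponent-idle probability $P_j^{slp}$ is the only factor carrying a $\tau_j$ dependence. Hence $\frac{\partial^2 P_i^{srv}}{\partial \tau_i \partial \tau_j} = P_i^{cov}\, P(T_i \geq T_j)\, \frac{\partial P_i^{bcn}}{\partial \tau_i}\, \frac{\partial P_j^{slp}}{\partial \tau_j}$. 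Since $P_i^{cov} \geq 0$ and $P(T_i \geq T_j) \geq 0$, it suffices to check that $\frac{\partial P_i^{bcn}}{\partial \tau_i} \geq 0$ (a longer beaconing window can only increase the chance of beaconing at the first encounter — clear from the closed form for $P_i^{bcn}$, whose derivative in $\tau_i$ is $\lambda_i e^{-\lambda_i \tau_i}\frac{e^{\lambda_i T}(1-e^{-m\lambda_i})}{e^{\lambda_i T}-1} \geq 0$) and that $\frac{\partial P_j^{slp}}{\partial \tau_j} \leq 0$ (a longer activity window for the opponent shrinks its idle probability — again immediate from the closed form for $P_j^{slp}$). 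The product of a non-negative and a non-positive quantity is non-positive, giving $\frac{\partial^2 P_i^{srv}}{\partial \tau_i \partial \tau_j} \leq 0$ and therefore $\frac{\partial^2 U_i}{\partial \tau_i \partial \tau_j} \leq 0$, which establishes sub-modularity.

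The main obstacle I anticipate is bookkeeping rather than conceptual: correctly isolating which factors of $P_i^{srv}$ actually carry a $\tau_j$-dependence (one must be careful that $P(T_i \leq T_j)$ and $P(T_i \geq T_j)$ are written in terms of $\lambda_i,\lambda_j,m$ and not the raw beaconing durations, so they drop out of the cross-derivative), and then signing the two elementary derivatives $\partial P_i^{bcn}/\partial \tau_i$ and $\partial P_j^{slp}/\partial \tau_j$ from their somewhat unwieldy exponential closed forms. Once those signs are pinned down, the conclusion follows directly from Topkis' characterization of sub-modular games, and I would close by remarking that this is consistent with the monotone-comparative-statics intuition already stated informally before the lemma: a reduction in one UAV's availability makes the rival more often the unique active responder, which lowers the marginal service-probability gain of the rival's own beaconing and so induces it to reduce its availability too.
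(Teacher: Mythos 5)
Your proposal is correct and follows essentially the same route as the paper: both start from the mixed-derivative expression $\frac{\partial^2 U_i}{\partial \tau_j \partial \tau_i}=\frac{\partial^2 P_i^{srv}}{\partial \tau_j \partial \tau_i}\left(N f_i \pi_i - \psi_i\right)$ together with the demand-threshold condition $N f_i \pi_i - \psi_i \geq 0$ from the existence proof, and conclude sub-modularity from the non-positive sign of the cross-derivative. The only difference is that the paper declares this sign ``straightforward,'' whereas you usefully make it explicit by factoring $P_i^{srv}$, isolating $P_j^{slp}$ as the sole $\tau_j$-dependent factor, and signing $\partial P_i^{bcn}/\partial \tau_i \geq 0$ and $\partial P_j^{slp}/\partial \tau_j \leq 0$ from their closed forms.
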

\begin{proof} To prove the Sub-modularity of the game, we compute the second order mixed derivative of $U_{i}(\cdot)$ with respect to two beaconing durations $\tau_{i}$ and $\tau_{j}$. With the mixed derivative expression given in (19), it is straightforward to conclude
that it is effectively about a sub-modular game.\\
\end{proof}

With sub-modularity, the best response of a UAV $i$s a decreasing  function of the other UAV availability strategy (beaconing period). Consequently, this game possesses a unique equilibria that best response dynamic learning  scheme can reach with probability 1.

\section{ Pricing game with fixed availability period}

Despite the several advantages of availability competition, still the price is a more common strategical decision in telecommunication business. This motivates us to investigate  the pricing game with fixed availability parameter. We argue in this section that, operators set prices and let the consumers decide on their experienced demand. That is, each operator is maximizing its profit believing that its competitor is doing the same. Conceptually, this situation is a typical Bertrand-Nash equilibrium (see papers in \cite{vives1990nash} and \cite{vives1984duopoly}). We turn first to mathematically analyze the equilibrium existence.\\

\begin{proposition} The Pricing game admits at least one Nash Equilibrium.\\
\end{proposition}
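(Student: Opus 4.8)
The plan is to reuse the recipe behind Proposition~1: establish that the pricing game has a non-empty, convex and compact action set, a jointly continuous payoff, and a payoff that is quasi-concave in each player's own fee; existence of a pure Nash equilibrium then follows by a standard fixed-point argument.

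First I would dispatch the two easy ingredients. In the pricing stage the availability profile $(\tau_{i},\tau_{j})$ is frozen, hence $P_{i}^{cov}$, $P_{i}^{bcn}$, $P_{j}^{slp}$ and therefore $P_{i}^{srv}$ are all constants; each player's action set is the closed bounded interval $[0,f_{max}]$, which is non-empty, convex and compact, and $\mathcal{A}$ inherits these properties. The market share $\pi_{i}$ in (11) is a ratio of strictly positive exponentials, hence $C^{\infty}$ in $(f_{i},f_{j})$, so $U_{i}$ in (12)--(13) --- a finite combination of $\pi_{i}$, the constants above and $f_{i}$ --- is continuous (indeed smooth) on $\mathcal{A}$.

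The substantive step is the quasi-concavity of $f_{i}\mapsto U_{i}(\tau_{i},\tau_{j},f_{i},f_{j})$. Collecting the fee-independent terms, the payoff reduces to $U_{i}=(C_{i}f_{i}-\epsilon_{b})\,\pi_{i}+\mathrm{const}$, with $C_{i}=\sum\limits_{IoT_{i}\in\mathcal{R}_{i}}P_{i}^{srv}>0$; using the logit identity $\partial\pi_{i}/\partial f_{i}=-\frac{1}{\mu}\pi_{i}(1-\pi_{i})$ one obtains $\partial U_{i}/\partial f_{i}=\pi_{i}\,g_{i}(f_{i})$ with
\begin{equation}
g_{i}(f_{i})=C_{i}-\frac{1}{\mu}\,(C_{i}f_{i}-\epsilon_{b})(1-\pi_{i}).
\end{equation}
Since $\pi_{i}>0$, the sign of $\partial U_{i}/\partial f_{i}$ coincides with that of $g_{i}$. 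I would then verify that for $f_{i}\le\epsilon_{b}/C_{i}$ one has $g_{i}>C_{i}>0$, whereas for $f_{i}\ge\epsilon_{b}/C_{i}$ the product $(C_{i}f_{i}-\epsilon_{b})(1-\pi_{i})$ is non-negative and strictly increasing in $f_{i}$, so that $g_{i}$ is strictly decreasing on that range and changes sign at most once, from $+$ to $-$. Consequently $U_{i}$ is first non-decreasing then non-increasing in $f_{i}$, i.e. unimodal, which is exactly the quasi-concavity required --- and, in contrast with Proposition~1, no extra parameter restriction is needed. (If instead one wishes to mirror the second-derivative argument of Proposition~1 and obtain $\partial^{2}U_{i}/\partial f_{i}^{2}<0$ outright, it suffices to impose the mild regularity condition $f_{max}\le 2\mu$, the pricing analogue of the demand-threshold condition (16).)

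With the three ingredients in hand, every best-response correspondence is non-empty, convex-valued and upper hemicontinuous over a compact convex domain, so Kakutani's fixed-point theorem --- equivalently the Debreu--Glicksberg--Fan existence theorem for quasi-concave games --- yields a fee profile $(f_{i}^{*},f_{j}^{*})$ that is a Nash equilibrium. The main obstacle is precisely this quasi-concavity step: unlike the service-probability factor analysed in the availability game, the logit revenue $C_{i}f_{i}\pi_{i}$ is not globally concave, so concavity cannot be read off directly and one must argue unimodality through the monotonicity of $g_{i}$ (or restrict the fee range); the remaining steps are routine.
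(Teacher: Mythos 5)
Your proof is correct, and its skeleton matches the paper's: compact convex strategy set $[0,f_{max}]$, continuity of $U_i$ in the fees, quasi-concavity in the own fee, and existence via the standard fixed-point/Debreu--Glicksberg--Fan argument. Where you genuinely diverge is the quasi-concavity step. The paper computes $\frac{\partial^2 U_i}{\partial f_i^2}=N P_i^{srv}\bigl[2\frac{\partial \pi_i}{\partial f_i}+f_i\frac{\partial^2\pi_i}{\partial f_i^2}\bigr]$ and declares the resulting expression ``clearly negative''; in fact, writing it as $\frac{1}{\mu}\pi_i(1-\pi_i)\bigl[\frac{f_i(1-2\pi_i)}{\mu}-2\bigr]$ shows the sign claim fails when $f_i>f_j$ and $f_i$ is large relative to $\mu$ (e.g.\ $\pi_i$ small and $f_i>2\mu$), so the paper's concavity argument implicitly needs a restriction of exactly the type you flag, $f_{max}\le 2\mu$. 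Your route instead establishes unimodality directly: $\partial U_i/\partial f_i=\pi_i\,g_i(f_i)$ with $g_i$ positive below $\epsilon_b/C_i$ and strictly decreasing above it, hence a single sign change from $+$ to $-$, which gives quasi-concavity unconditionally --- precisely what the existence theorem needs --- at the cost of a slightly longer first-derivative analysis. (A minor modelling difference: you let the beaconing-energy term $\epsilon_b\pi_i$ depend on $f_i$, whereas the paper's derivative treats it as constant in the fee; your argument handles the harder variant, and setting $\epsilon_b=0$ recovers the paper's case with $g_i$ strictly decreasing on all of $[0,f_{max}]$.) So your proposal is not just a valid alternative but patches the weakest step of the published proof.
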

\begin{proof} The strategy set represented in $[0;f_{max}]$ is convex, closed and compact interval.
In addition the utility function is continue with $f_{i}$. Furthermore, we show that the second derivative with respect to $f_{i}$ has a negative sign.
After calculation, 
\begin{equation}
\frac{\partial \pi_{i}(f_{i},f_{j})}{\partial f_{i}}=-\frac{\frac{1}{\mu}  \exp\left(-\frac{f_{i}}{\mu}\right)\exp\left(-\frac{f_{j}}{\mu}\right)}{\left[\exp\left(-\frac{f_{i}}{\mu}\right)+\exp\left(-\frac{f_{j}}{\mu}\right)\right]^2},
\end{equation}
\begin{equation}
\frac{\partial^2 \pi_{i}(f_{i},f_{j})}{\partial f_{i}^2}=\frac{\exp\left(-\frac{f_{i}}{\mu}\right)\exp\left(-\frac{f_{j}}{\mu}\right)\left[\exp\left(-\frac{2f_{j}}{\mu}\right)-\exp\left(-\frac{2f_{i}}{\mu}\right)\right]}{\mu^2\left[\exp\left(-\frac{f_{i}}{\mu}\right)+\exp\left(-\frac{f_{j}}{\mu}\right)\right]^4}.
\end{equation}
The second order derivative is expressed as bellow:
\begin{eqnarray}
\hspace{-0.5cm}\frac{\partial^2 U_{i}(f_{i},f_{j})}{\partial f_{i}^2}&=& N.P_{i}^{srv}(\tau_{i},\tau_{j})\cdot\left[2\frac{\partial\pi_{i}}{\partial f_{i}}
+ f_{i} \frac{\partial^2 \pi_{i}}{\partial^2 f_{i}}\right]\nonumber\\
&=&\frac{N.P_{i}^{srv}\exp\left(-\frac{f_{i}}{\mu}\right)\exp\left(-\frac{f_{j}}{\mu}\right)}{\mu\left(\exp\left(-\frac{f_{i}}{\mu}\right)+\exp\left(-\frac{f_{j}}{\mu}\right)\right)^2}.\nonumber\\
&&\left[\frac{f_{i}\left(\exp\left(-2\frac{f_{j}}{\mu}\right)-\exp\left(-2\frac{f_{i}}{\mu}\right)\right)}{\mu\left(\exp\left(-\frac{f_{i}}{\mu}\right)+\exp\left(-\frac{f_{j}}{\mu}\right)\right)^2} - 2\right],
\end{eqnarray}
which is clearly negative. Hence the profit function of each UAV $i$s quasi-concave with respect to its own fee, and then the NE existence. \\
\end{proof}
\begin{proposition}[Uniqueness of Price-based Nash equilibrium] The Pricing game has a unique Nash Equilibrium.\\
\end{proposition}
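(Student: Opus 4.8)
The plan is to follow the same template used for the availability game: having already established quasi-concavity of each $U_i$ in $f_i$ (Proposition~5), it suffices to show that the pricing game is dominance solvable, which by Rosen's criterion yields uniqueness of the Nash equilibrium. Concretely, I would verify the diagonal strict concavity / dominant-diagonal condition
\begin{equation}
-\frac{\partial^2 U_{i}(f_{i},f_{j})}{\partial f_{i}^2}-\left|\frac{\partial^2 U_{i}(f_{i},f_{j})}{\partial f_{j}\,\partial f_{i}}\right| \geq 0 .
\end{equation}
The first term is the negative of the already-computed expression in~(25), so it is strictly positive. The remaining task is to compute the cross-derivative $\partial^2 U_i/\partial f_j \partial f_i$ and bound its absolute value by that positive quantity.

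First I would differentiate $U_i = N\,P_i^{srv}(\tau_i,\tau_j)\,\pi_i\,f_i - E_i$ with respect to $f_j$ after already differentiating once in $f_i$. Since $P_i^{srv}$ and $\tau$ are fixed in the pricing game, the only $f$-dependence sits in $\pi_i$, so the cross term reduces to $N\,P_i^{srv}\,f_i\,\partial^2\pi_i/\partial f_j\partial f_i$ (plus the $f_i$-differentiated $\pi_i$ term times $\partial/\partial f_j$, i.e. $N\,P_i^{srv}\,\partial^2(\pi_i f_i)/\partial f_j \partial f_i$). Using the logit form $\pi_i = e^{-f_i/\mu}/(e^{-f_i/\mu}+e^{-f_j/\mu})$, one has $\partial\pi_i/\partial f_j = \tfrac1\mu \pi_i(1-\pi_i) = -\partial\pi_i/\partial f_i$, and $\partial^2\pi_i/\partial f_j\partial f_i$ can be written compactly in terms of $\pi_i$ as $\tfrac{1}{\mu^2}\pi_i(1-\pi_i)(2\pi_i-1)$. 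Substituting these, both $\partial^2 U_i/\partial f_i^2$ and $\partial^2 U_i/\partial f_j \partial f_i$ become multiples of the common positive factor $N P_i^{srv}\,\mu^{-1}\pi_i(1-\pi_i)$, so the inequality collapses to a scalar comparison between two bracketed polynomials in $\pi_i$ (and $f_i/\mu$) on $[0,1]$.

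The main obstacle — the only nontrivial step — is this final scalar estimate: showing that after factoring out the common positive term, $\bigl|\text{(cross bracket)}\bigr| \le \bigl|\text{(own bracket)}\bigr|$ uniformly. I expect the own-derivative bracket to dominate because it carries the extra $-2$ constant term coming from the $2\,\partial\pi_i/\partial f_i$ contribution that has no counterpart of the same magnitude in the mixed derivative; intuitively, a price cut by $i$ hurts $i$'s own marginal revenue more than a price move by $j$ does, which is exactly the strict-diagonal-dominance heuristic. I would make this precise by noting $\pi_i \in (0,1)$, $1-\pi_i \in (0,1)$, and $|2\pi_i - 1| \le 1$, so each cross-term contribution is bounded by a quantity strictly smaller than the corresponding own-term contribution, with the gap guaranteed by the standalone $-2$. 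Once that inequality is in hand, the dominance-solvability condition holds, Rosen's diagonal strict concavity condition is satisfied on the convex compact strategy set $[0,f_{max}]^2$, and hence the pricing game has a unique Nash equilibrium, which (by the standard argument for supermodular/dominance-solvable games) best-response dynamics reach with probability one.
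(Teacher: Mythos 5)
Your overall route is the same as the paper's: verify the dominance--solvability (diagonal dominance) condition $-\partial^2 U_i/\partial f_i^2-\bigl|\partial^2 U_i/\partial f_j\,\partial f_i\bigr|\ge 0$ and invoke Rosen's diagonal strict concavity for uniqueness, and your logit identities for $\partial\pi_i/\partial f_i$, $\partial\pi_i/\partial f_j$ and the mixed second derivative are correct. After factoring out the common positive term $\frac{N P_i^{srv}}{\mu}\pi_i(1-\pi_i)$, the own-derivative bracket is $2-x$ and the cross bracket is $1-x$, where $x=\frac{f_i(1-2\pi_i)}{\mu}=\frac{f_i}{\mu}\tanh\bigl(\frac{f_i-f_j}{2\mu}\bigr)$. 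The step you defer as the ``only nontrivial'' scalar estimate is precisely where the argument can fail, and the justification you sketch does not close it: bounding $\pi_i,\,1-\pi_i\in(0,1)$ and $|2\pi_i-1|\le 1$ controls the $\tanh$ factor but not the multiplier $f_i/\mu$, which is not bounded by $1$ on $[0,f_{max}]$. The needed inequality $(2-x)\ge|1-x|$ holds if and only if $x\le 3/2$; with, say, $f_i=f_{max}=10$, $f_j=0$, $\mu=2$, one gets $x\approx 4.9$, in which case the own second derivative is even positive (quasi-concavity itself breaks) and the dominance condition fails. So the ``standalone $-2$'' does not guarantee the gap; a correct completion needs an explicit restriction such as $f_{max}\le\frac{3}{2}\mu$ (or a price-difference condition confining play to the region $x\le 3/2$), and as written your proof has a genuine gap at exactly the step you flagged.

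In fairness, the paper's own proof carries the same caveat: it simply asserts that the difference equals $\frac{N P_i^{srv}\exp(-f_i/\mu)\exp(-f_j/\mu)}{[\exp(-f_i/\mu)+\exp(-f_j/\mu)]^2}$, an identity which (up to a missing factor $1/\mu$) is valid only in the regime $x\le 1$, where $|1-x|=1-x$. You have therefore reproduced the paper's strategy faithfully; the difference is that you make the missing estimate explicit and then propose a bound that does not work, whereas the paper passes over it silently. To repair either version, one must impose a parameter condition of the above type or abandon the diagonal-dominance shortcut in favor of a direct analysis of the best-response map (for instance, showing it is a contraction on the relevant strategy region).
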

\begin{proof}
The utility function satisfies the dominance solvability conditions and consequently satisfies  Rosen’s conditions defined as: 
\begin{equation}
\begin{split}
-\frac{\partial^2 U_{i}(f_{i},f_{j})}{\partial f_{i}^2}-\left|\frac{\partial^2 U_{i}(f_{i},f_{j})}{\partial f_{j} \partial f_{i}}\right|\\
=\frac{N.P_{i}^{srv}\exp\left(-\frac{f_{i}}{\mu}\right)\exp\left(-\frac{f_{j}}{\mu}\right)}{\left[\exp\left(-\frac{f_{i}}{\mu}\right)+\exp\left(-\frac{f_{j}}{\mu}\right)\right]^2} \geq 0.
\end{split}
\end{equation}
\end{proof}

A remarkable property of this game is that when one player picks a higher action, the others has incentive to follow and do the same.  Roughly, this introduces the notion of a super-modular game characterized by strategic complementarities.\\

\begin{lemma} The game $\mathcal{G}$ with $N$ mobile devices and two UAVs service providers is super-modular.\\
\end{lemma}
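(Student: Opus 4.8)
The plan is to verify super-modularity through Topkis's differential criterion. Since each player's action set is the compact interval $[0,f_{max}]$ and $U_i$ is twice continuously differentiable in $(f_i,f_j)$ -- a fact already exploited in Propositions~4 and~5 -- the game $\mathcal{G}$ is super-modular in the fees precisely when the mixed second-order derivative $\partial^2 U_i/\partial f_i\,\partial f_j$ is non-negative on the whole box $[0,f_{max}]^2$, for $i\in\{1,2\}$ and $j\neq i$. So the task reduces to computing that single cross derivative and checking its sign; the ordinal consequences (monotone best responses, uniqueness) then follow for free.

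First I would record the two derivatives of the logit share $\pi_i$ in~(12) that were not needed earlier: $\partial\pi_i/\partial f_j=\tfrac{1}{\mu}\,\pi_i\pi_j$ (strictly positive, so a rival's price hike enlarges $i$'s share) and, differentiating~(22) once more in $f_j$, $\partial^2\pi_i/\partial f_i\,\partial f_j=\tfrac{1}{\mu^2}\,\pi_i\pi_j(\pi_i-\pi_j)$. Then, with $U_i=N\,P_i^{srv}(\tau_i,\tau_j)\,f_i\,\pi_i$ -- the dissipated-energy term $E_i$ being treated as fee-independent, exactly as in the computation that produced~(24) -- the product rule should give
\begin{equation}
\frac{\partial^2 U_i(f_i,f_j)}{\partial f_i\,\partial f_j}=N P_i^{srv}\!\left[\frac{\partial\pi_i}{\partial f_j}+f_i\,\frac{\partial^2\pi_i}{\partial f_i\,\partial f_j}\right]=\frac{N P_i^{srv}}{\mu}\,\pi_i\pi_j\left[1+\frac{f_i}{\mu}\,(\pi_i-\pi_j)\right].
\end{equation}
The prefactor $\tfrac{N P_i^{srv}}{\mu}\pi_i\pi_j$ is strictly positive, so the whole question reduces to the sign of the bracket.

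That bracket is where I expect the only genuine difficulty. Because $\pi_i-\pi_j\in(-1,1)$ one has $1+\tfrac{f_i}{\mu}(\pi_i-\pi_j)\ge 1-\tfrac{f_i}{\mu}\ge 1-\tfrac{f_{max}}{\mu}$, so the bracket -- hence the mixed derivative -- is non-negative as soon as $f_{max}\le\mu$ (a corner analysis over $[0,f_{max}]^2$ gives the slightly sharper sufficient condition $\tfrac{f_{max}}{\mu}\tanh(\tfrac{f_{max}}{2\mu})\le 1$). This is a mild restriction on the admissible fee range, of the same flavour as the demand-threshold condition $Nf_i\pi_i-\psi_i\ge 0$ invoked in Proposition~1, and it is in fact exactly the inequality that makes Rosen's condition~(25) in Proposition~5 collapse to the clean positive quantity stated there. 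Granting it, $\partial^2 U_i/\partial f_i\,\partial f_j\ge 0$ identically, which is super-modularity of $\mathcal{G}$ in the fees; since the bracket is then strictly positive, each best response $f_i^{\ast}(f_j)$ is strictly increasing, i.e. prices are strict strategic complements, and together with Propositions~4 and~5 the equilibrium is the unique fixed point of a monotone best-response operator, reachable by the best-response learning scheme of Section~V when initialised at the smallest (or largest) price profile.
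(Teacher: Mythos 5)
Your route is the same as the paper's: both reduce super-modularity to the sign of the mixed partial $\frac{\partial^2 U_i}{\partial f_i\,\partial f_j}$ (Topkis' differential criterion) and compute it explicitly from the logit share, treating $E_i$ as fee-independent exactly as in the computation of (24). Where you differ is in the algebra and in what is claimed about the sign. Your expression $\frac{N P_i^{srv}}{\mu}\pi_i\pi_j\bigl[1+\frac{f_i}{\mu}(\pi_i-\pi_j)\bigr]$ is the one the product rule actually yields: with $a=e^{-f_i/\mu}$, $b=e^{-f_j/\mu}$ one gets $\frac{N P_i^{srv}\,ab}{\mu(a+b)^2}\bigl[1+\frac{f_i}{\mu}\,\frac{a-b}{a+b}\bigr]$, whereas the bracket in the paper's (26) reads $1-\frac{f_i}{\mu}\,\frac{a^2+b^2}{(a+b)^2}$, which does not agree with this and appears to be an algebra slip; the paper then simply declares its expression ``clearly of positive sign''. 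Your (correct) bracket shows that no such unconditional claim holds on the whole box: at the corner $f_i=f_{max}$, $f_j=0$ it equals $1-\frac{f_{max}}{\mu}\tanh\bigl(\frac{f_{max}}{2\mu}\bigr)$, which is negative once $f_{max}/\mu$ is large (for instance with the paper's own simulation values $f_{max}=10$, $\mu=2$). So your sufficient condition $f_{max}\le\mu$, and the sharp corner condition you state, are not cosmetic refinements but are needed for the argument; your side remark that under this condition the dominance difference of Proposition 5 collapses to $\frac{N P_i^{srv}\pi_i\pi_j}{\mu}$ is also accurate (the paper's (25) states this quantity up to a factor $1/\mu$).

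The one caveat to flag is that the lemma as stated asserts super-modularity with no restriction on the fee range, and your proof establishes only a conditional version; indeed your corrected cross-partial shows the unconditional statement cannot be obtained this way, since increasing differences fail outside the regime $\frac{f_{max}}{\mu}\tanh\bigl(\frac{f_{max}}{2\mu}\bigr)\le 1$. In short: same method as the paper, but with the cross-partial computed correctly and an explicit parameter condition replacing the paper's unsupported ``clearly positive''; you should state that condition as a hypothesis of the lemma rather than present the result as unconditional.
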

\begin{proof}
 To prove the Super-modularity of the game, we compute the second order mixed derivative of $U_{i}$ with respect to  its proposed fee ant that of its competitor ($f_{i}$ and $f_{j}$).
With the following mixed derivative expression, which is clearly of positive sign, we argue that it is a super-modular game.
\begin{equation}
\begin{split}
\frac{\partial^2 U_{i}(f_{i},f_{j})}{\partial f_{j} \partial f_{i}}=&\frac{N.P_{i}^{srv}\exp\left(-\frac{f_{i}}{\mu}\right)\exp\left(-\frac{f_{j}}{\mu}\right)}{\mu\left[\exp\left(-\frac{f_{i}}{\mu}\right)+\exp\left(-\frac{f_{j}}{\mu}\right)\right]^2}.\\
&\left(1-\frac{f_{i}\left[\exp\left(-2\frac{f_{j}}{\mu}\right)+\exp\left(-2\frac{f_{i}}{\mu}\right)\right]}{\mu\left[\exp\left(-\frac{f_{i}}{\mu}\right)+\exp\left(-\frac{f_{j}}{\mu}\right)\right]^2}\right).
\end{split}
\end{equation}
\end{proof}
Super-modularity tends to be analytically appealing. An interesting property is that it behaves well under various learning rules. Furthermore, convergence towards Nash equilibrium point is guaranteed under best response dynamics.

\section{Joint Pricing-Availability game}
\subsection{Insights on Real-World Implementation: Fully Distributed Learning}

Learning is a fundamental component of intelligence and cognition; it is defined as the ability of synthesizing the
acquired knowledge through practice, training, or experience in order to improve the future behavior of the learning agent.
In this section we introduce a learning scheme \cite{handouf2016telecommunication}, \cite{handouf2016pricing} aiming to understand the behavior of users during the interactions and to reach  the eventual convergence toward Nash Equilibrium. The goal is that UAVs service providers learn their own payoffs and determine the NE  joint Availability-pricing strategy.
To accomplish this, Best Response Algorithm is a suitable class; it leads to Nash Equilibrium status by exploiting fast monotony of best response functions. In fact,  NE concept is implicitly based on the assumption that players follow best-response dynamics until they reach a state from which, no player can improve his utility by changing unilaterally his strategy \cite{nash1950equilibrium}.\\

Algorithm 1 summarizes the best response dynamics that each  service provider performs in order to converge to the game Nash equilibrium. The best-response strategy of a player is defined as its optimal one, given the strategies of the other players. At a time iteration $t$, each service provider chooses the best joint Availability-Pricing strategy against the opponents strategies chosen at previous round  $t_{-1}$. It is worth to notice that convergence of the best response dynamics is granted for s-modular games \cite{Topkis98}. 
\begin{algorithm}
\caption{Best Response Dynamics}
\begin{algorithmic}
\State\textbf{Initialisation:}
\State {Service Provider $i$ initializes $f_{i}$ and $\tau_{i}$} at random values;
\State\textbf{Learning pathern:}
\State{for t=0,1,2,....m}\\
\State 
\begin{equation}\left(f_{i}^{t+1},\tau_{i}^{t+1}\right)= \argmax\limits_{f_{i}, \tau_{i}} \,U_{i}^{t}\left(f_{i},f_{j}^{t},\tau_{i},\tau_{i}^{t}\right).
\end{equation}
\State{end for}
\end{algorithmic}
\end{algorithm}

Best Response Algorithm exhibits enormous advantages as it  offers an accurate and fast convergence to the unique Nash equilibrium point. We say UAVs learn to play an equilibrium, if after a given number of iterations, the strategy profile converges to an equilibrium status. 

\subsection{Numerical Investigations}
The Pricing-Availability Best-response learning results are provided using MATLAB software. 
Parameters used in the simulation are set based on typical values such as $T = 1, m = 100, f_{min}=0$ and $f_{max} = 10$.
Here, we will analyze the impact of the various parameters especially market temperature, coverage probability and encounter rates  on the Nash equilibrium Price  as well as the provided QoS (measured by beaconing periods) for both competitors.\\

Under this setup, Figure 3 illustrates the behavior of service
providers under best response algorithm and the convergence
towards the Nash Equilibrium Price with time iterations. Here we
consider symmetric operators with randomly generated initial
points. A smaller temperature $\mu$ value means that consumers
are more and more rational so that the operator’s NE Price
tends towards lower values. Indeed with $\mu = 2$ the NE price is
absolutely zero $(=f_{min})$. However when $\mu$ becomes greater (consumers
are less rational) the NE price arises and reaches it’s maximal value with $\mu = 6$.
\begin{figure}
\centering{
\includegraphics [width=9.5cm,  height=5cm]{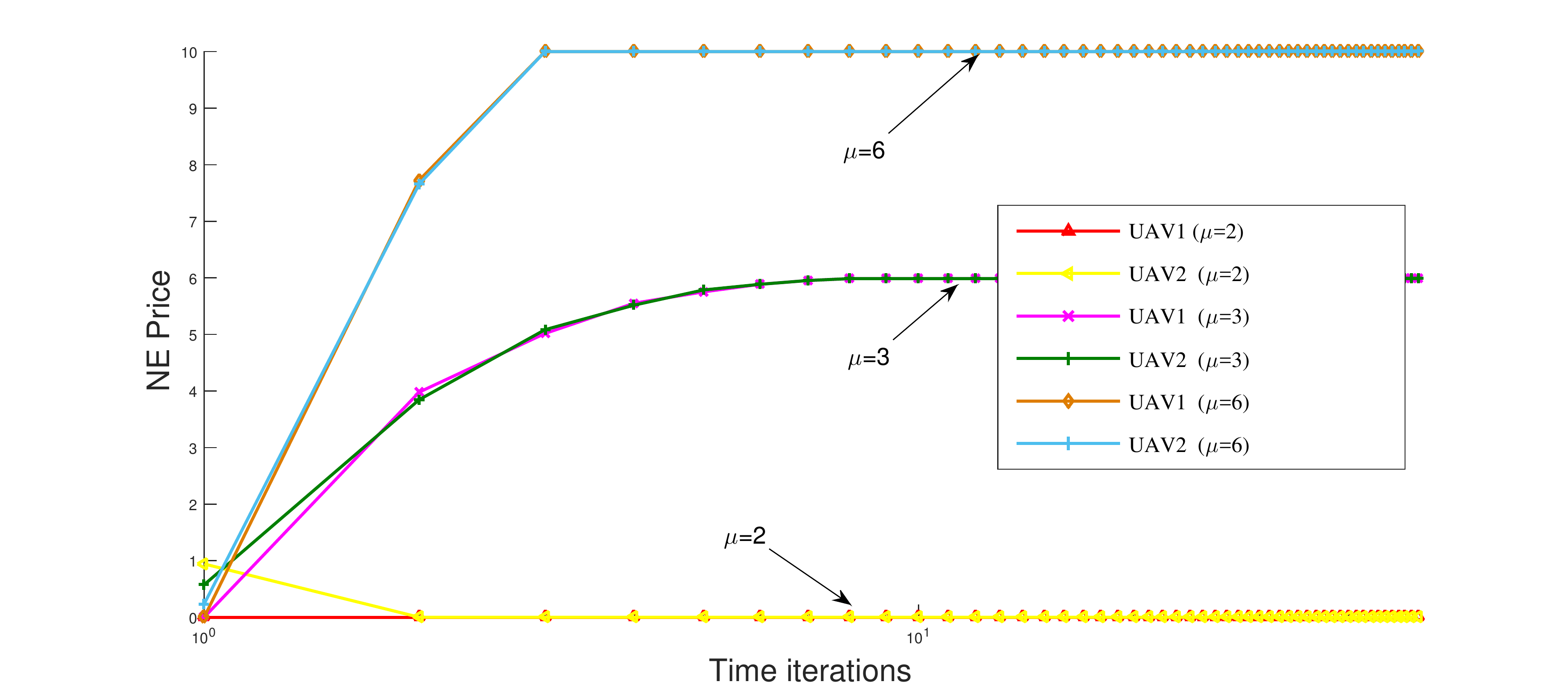}
\caption{NE price with temperature parameter under best response dynamic for two homogeneous operators}}\label{fig:NET}
\end{figure}
The same applies to Figure 4, where we depict the convergence of Algorithm 1 towards  Nash Equilibrium beaconing periods with different values of $\mu$. After some time iterations the two UAVs converge accurately to
the same Nash equilibrium beaconing period. With $\mu = 2$ the $\tau^{*}$ is zero however $\tau^{*}$ is increasing with $\mu$. By linking that with the NE price analysis, we can say that, with an increasing Availability periods, operators propose greater prices.\\
\begin{figure}
\centering{
\includegraphics [width=9.5cm,  height=5cm]{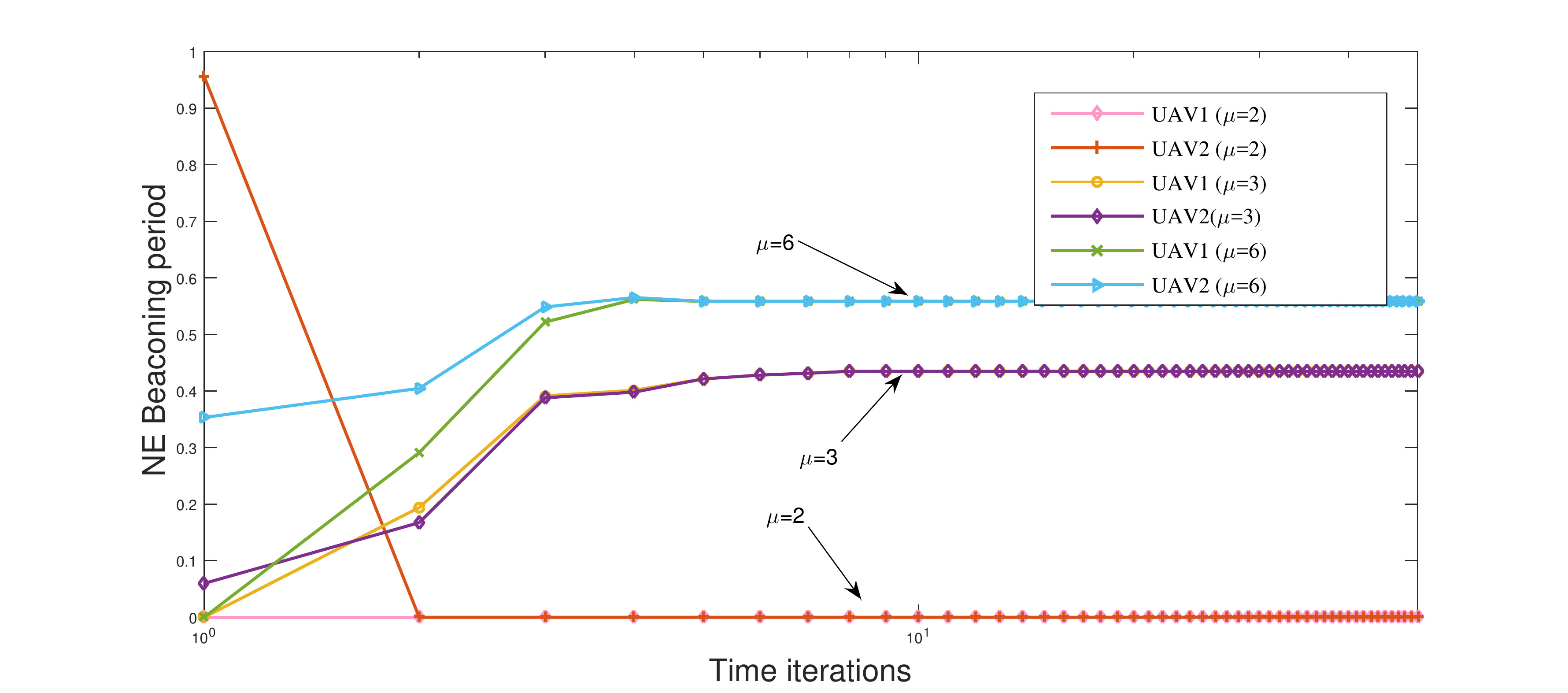}
\caption{NE beaconing period with temperature parameter under best response dynamic for two homogeneous operators}}
\end{figure}

We depict in figures 5 and 6 the individual best response prices respectively beaconing periods for two asymmetric UAVs. Each UAV has a different coverage probability value ( may be due to high parameter for example).  
Counterintuitively, figure 5 shows that the NE price does not depend on the coverage probability. Even with different coverage probabilities, the two competitors converge towards the same equilibrium price, this last value is increasing with UAV's coverage which is quite intuitive.\\

\begin{figure}
\centering{
\includegraphics [width=9.5cm,  height=5cm]{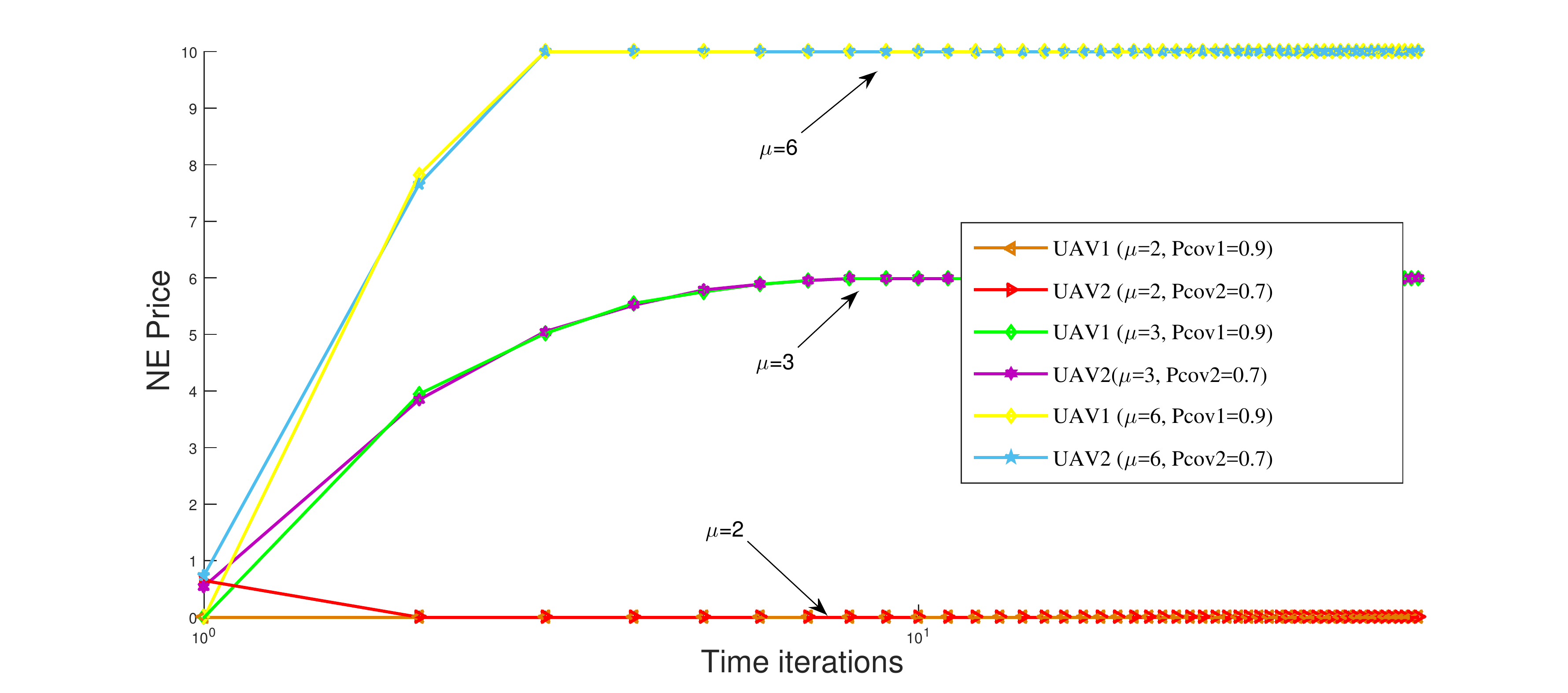}
\caption{NE beaconing period under best response dynamic for  inhomogeneous operators coverages }}
\end{figure}
However, equilibrium beaconing periods illustrated in figure 6 are different with UAVs of different coverage probabilities. It is worth notice that, the UAV with greater coverage, converges to a higher beaconing NE value, thus, offers a better QoS (availability service). We we exclude from this analysis  the case with small value of parameter $\mu$ where the two operators converge to null values. Yet the greater is the  coverage (likely UAV high), the longer is the NE beaconing availability advertising periods.\\

\begin{figure}
\centering{
\includegraphics [width=9.5cm,  height=5cm]{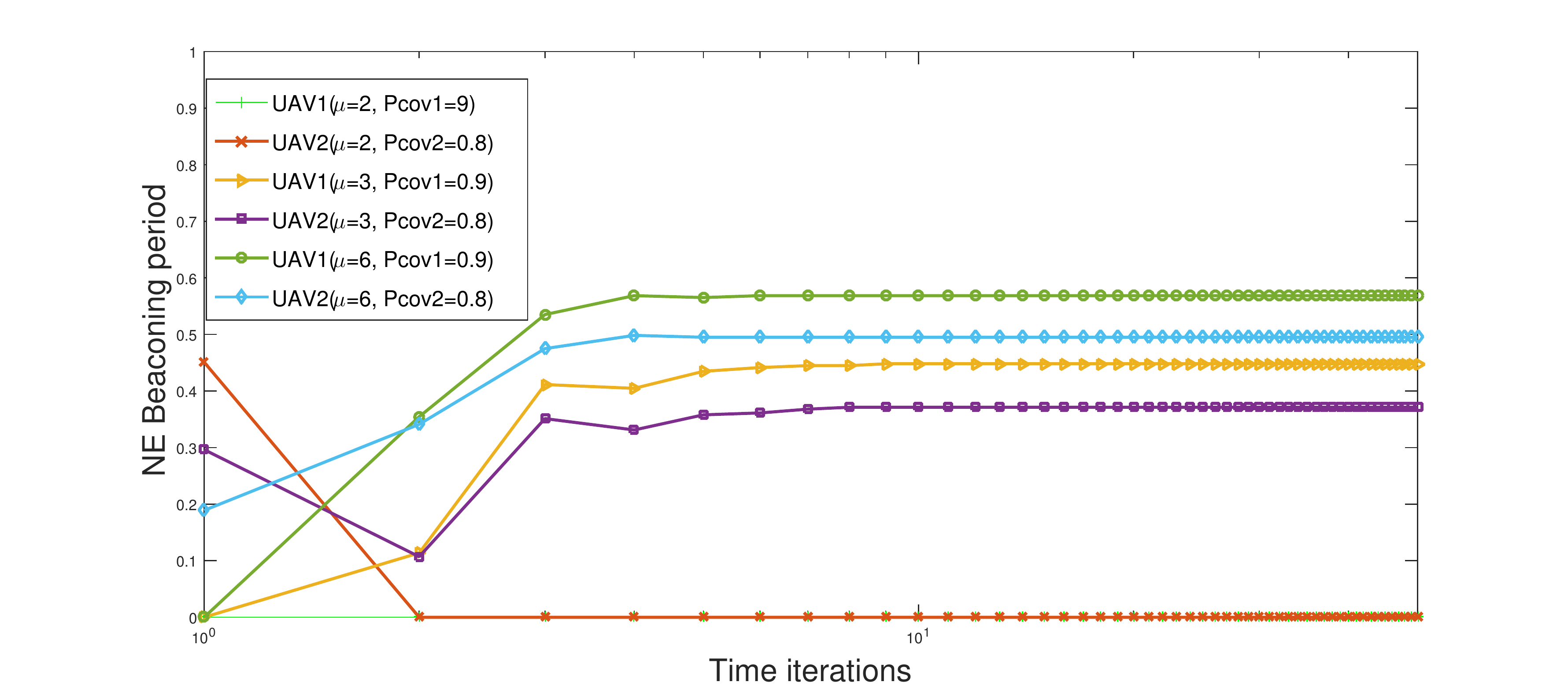}
\caption{NE price under best response dynamic for  asymetric  operators coverages}}
\end{figure}

Figure 7 illustrates the impact of both $\mu$ and the encounter rate $\lambda_{i}$ on the NE price on each UAV $i$ in an asymmetric setup (One UAV have an encounter rate advantage over the other).
We remark that, even with different encounter rates, the NE prices are the same however the NE with a minimum temperature is greater than that of symmetric case.\\
\begin{figure}
\centering{
\includegraphics [width=9.5cm,  height=5cm]{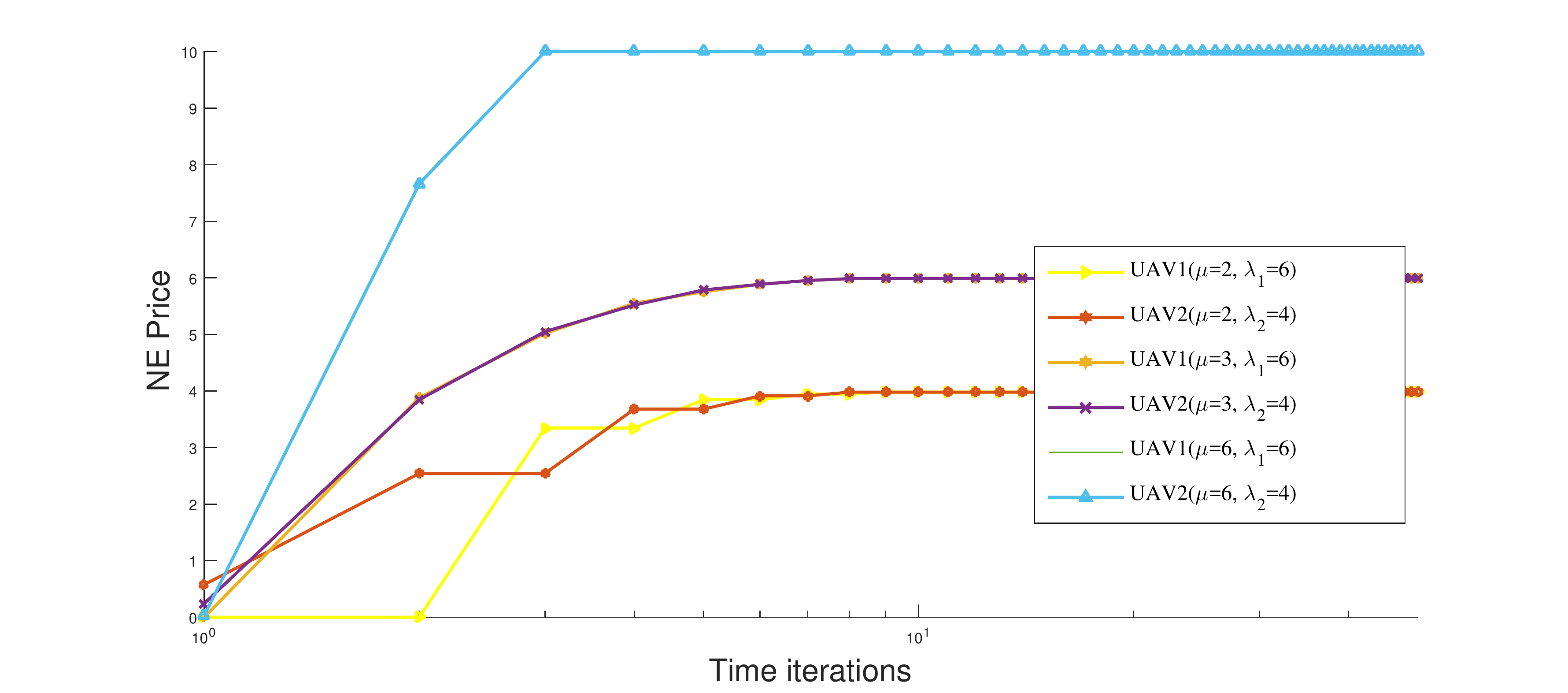}
\caption{NE prices under best response dynamic for inhomogeneous operators encounter rates }}
\end{figure}

We plot in figure 8 the beaconing best response learning evolution for two inhomogeneous UAVs (One UAV have an encounter rate advantage over the other), the NE beaconing periods are straightforward affected by $\mu$ as well as $\lambda$. Thus, with a greater $\mu$ and $\lambda$ the best response converge to a high value of $\tau^*$ however  the NE beaconing period for a UAV with smaller $\lambda$ is always lower.\\

\begin{figure}
\centering{
\includegraphics [width=9.5cm,  height=5cm]{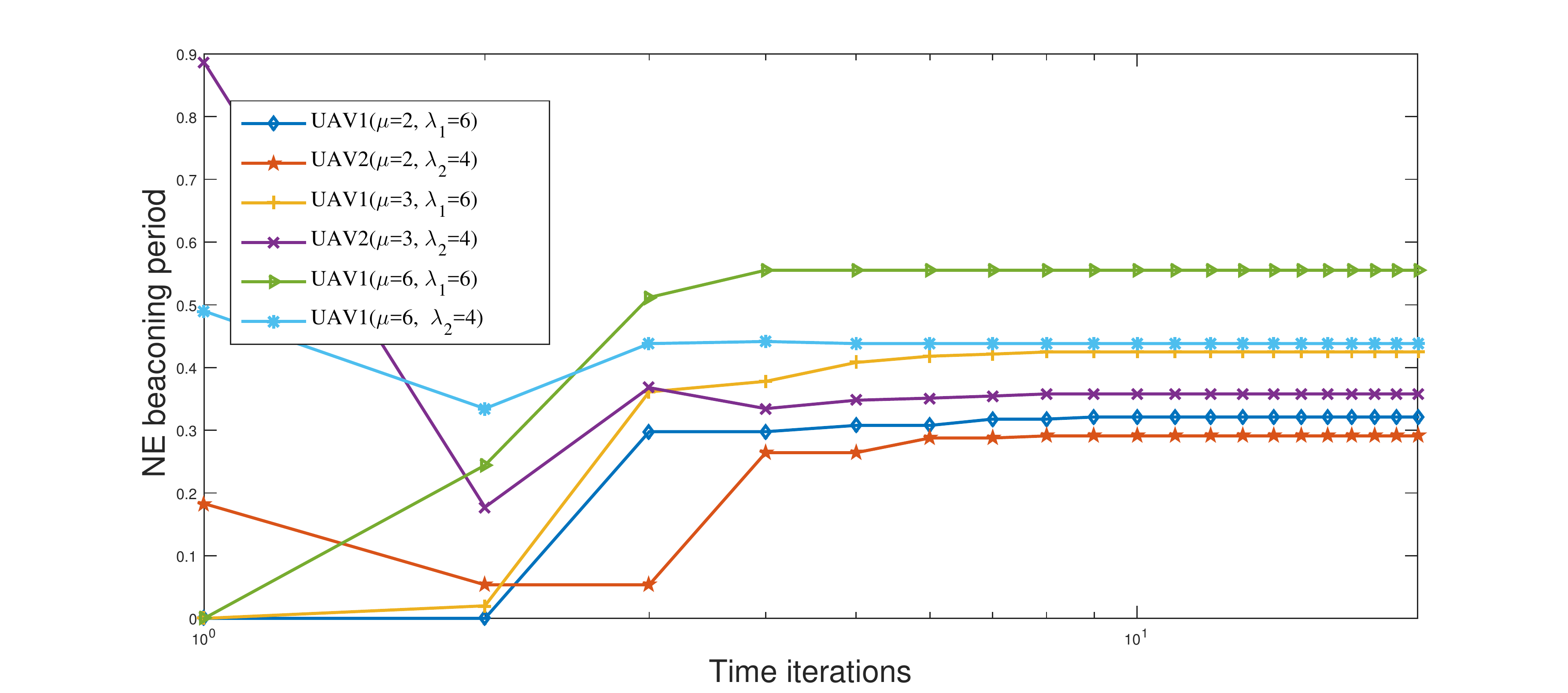}
\caption{NE beaconing periods under best response dynamic for  inhomogeneous operators encounter rates }}
\end{figure}

In order to study the  evolution impact of parameter $\mu$ on the learning results, figure 9 respectively figure 10 illustrates the NE price respectively NE beaconing period versus temperature parameter for two symmetric UAVs and different IoT population sizes.
The results in figure 9 show that with temperature parameter smaller than a certain threshold(for example $\mu = 2$ for $N=50$) the NE price is in its minimum value, and from this point the price is increasing until  the maximum price value ($f_{max}$) (for example $\mu$ = 6 for N=50). In this figure we argue that, with a greater population density the price tends to converge faster to the maximum value. That findings  impact obviously the beaconing period evolution illustrated in figure 10. With a greater population the NE Beaconing increases and reaches the equilibrium status in a smaller time window.\\
\begin{figure}
\centering{
\includegraphics [width=9.5cm,  height=5cm]{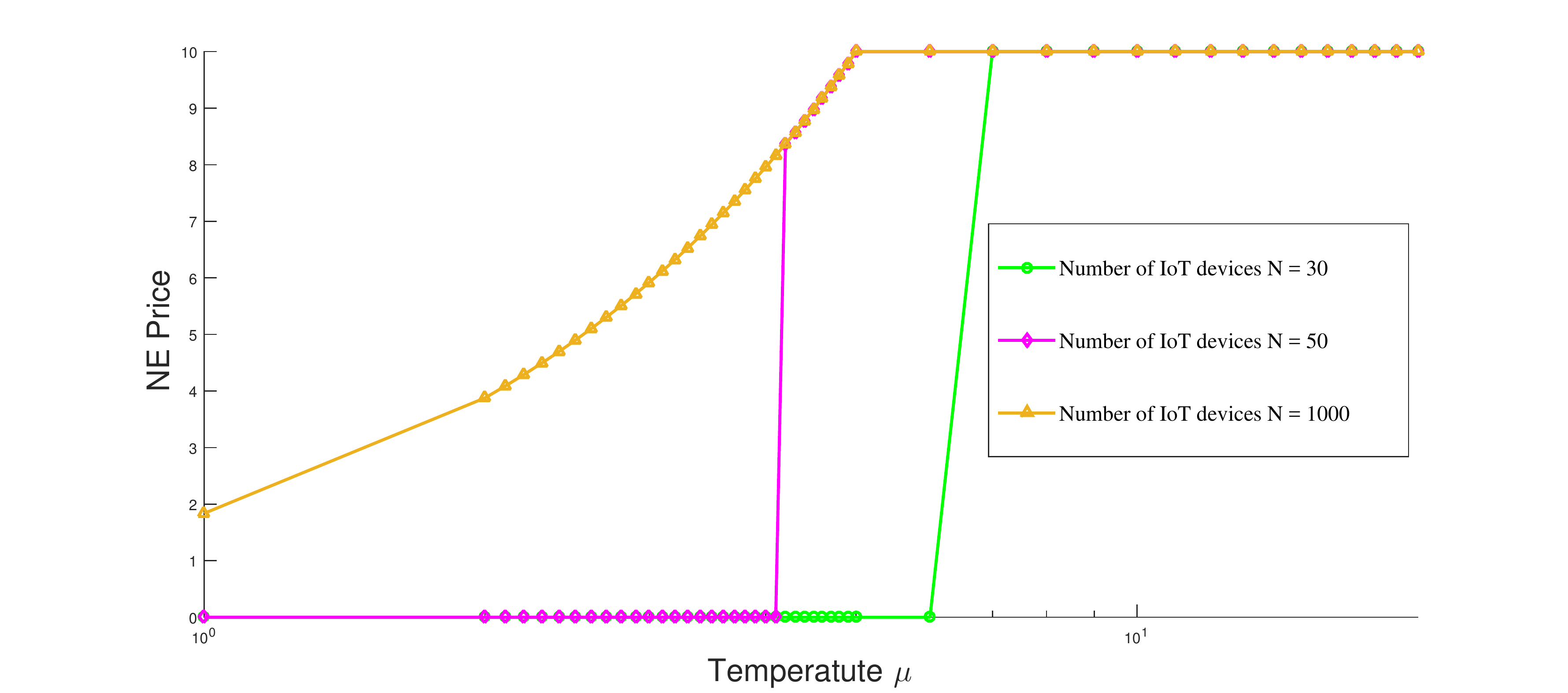}
\caption{NE Price evolution  for operators with respect to $\mu$ in different populations sizes}}
\end{figure}
\begin{figure}
\centering{
\includegraphics [width=9.5cm,  height=5cm]{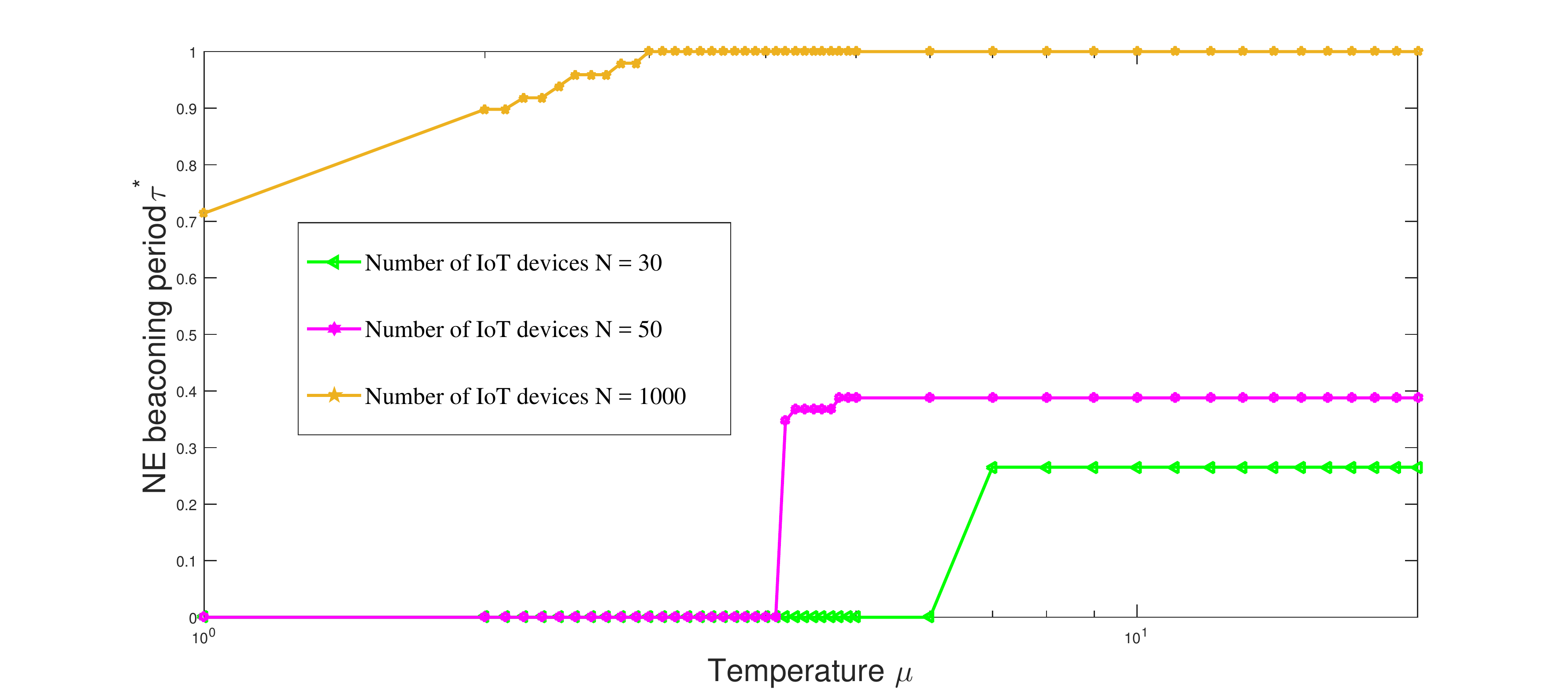}
\caption{NE beaconing periods evolution  for operators with respect to $\mu$ in different populations sizes}}
\end{figure}

Pricing and beaconing learning evolution under different coverage probability values are illustrated in figures 11, respectively 12.
In figure 11, the NE price is minimal with low coverage and from a certain coverage probability value (nearly 0.6) the best strategy tends to $f_{max}$. However, figure 12 shows clearly the linear impact of $P_{cov}$ on NE beaconing strategies. This last are increasing with $P_{cov}$, that confirms results in figure 5.\\

\begin{figure}
\centering{
\includegraphics [width=9.5cm,  height=5cm]{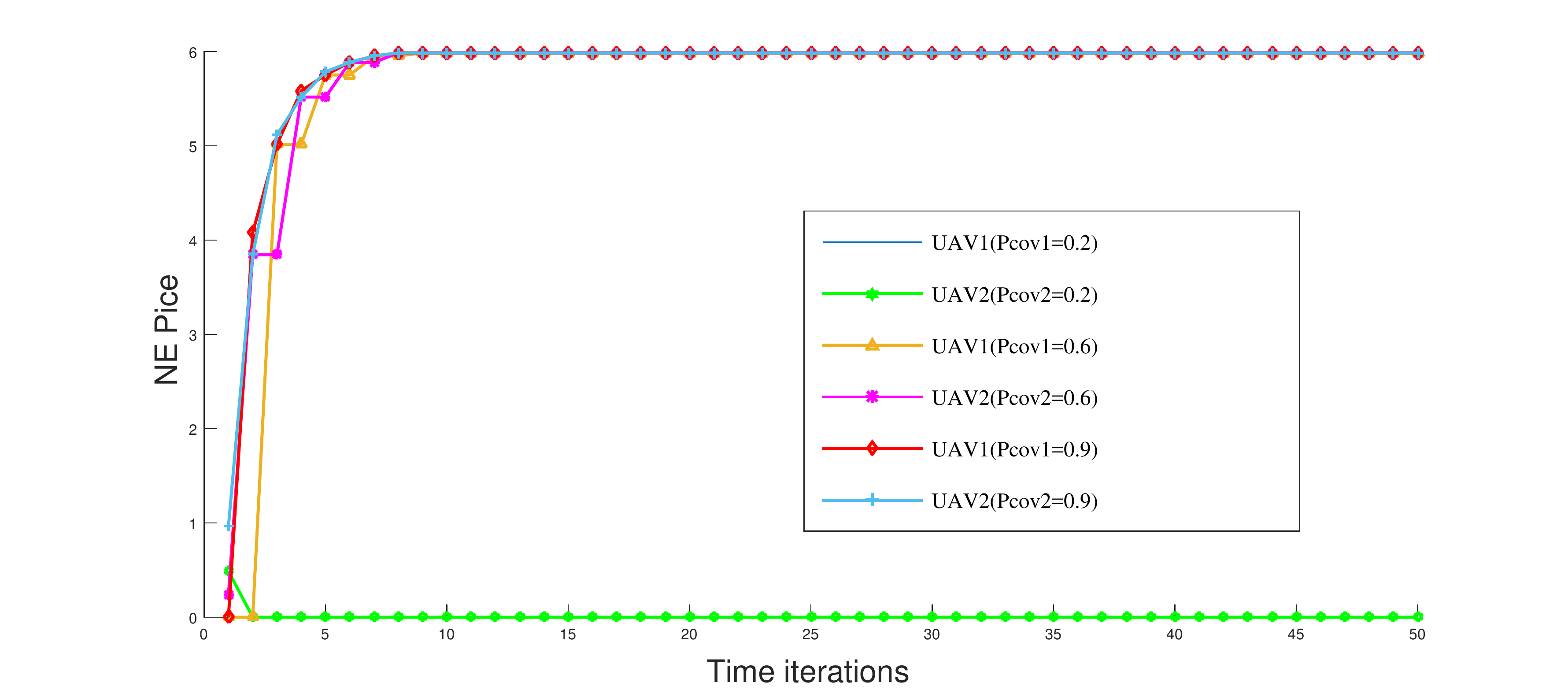}
\caption{NE price with coverage probability under best response dynamic for two homogeneous operators}}
\end{figure}

\begin{figure}
\centering{
\includegraphics [width=9.5cm,  height=5cm]{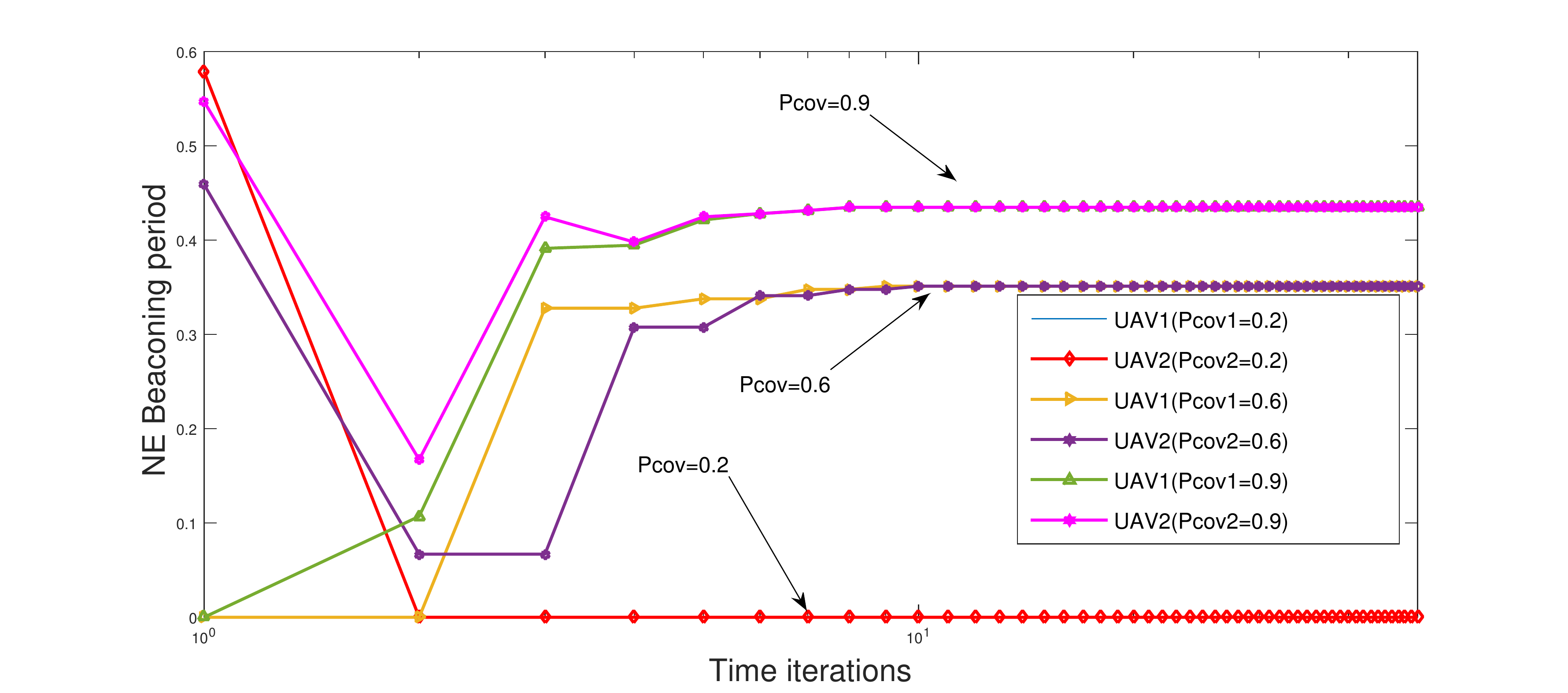}
\caption{NE beaconing period with coverage probability parameter under best response dynamic for two homogeneous operators}}
\end{figure}

Figure 13 respectively figure 14 illustrates the NE price with coverage probability
respectively NE beaconing period for two symmetric UAVs and different IoT population sizes. These two figures are provided  to study the impact of coverage probability on the learning results while considering $f_{min}=2$. Figure 13 shows clearly the "natural" impact of population density on the service price, with a greater population the operators can set a greater prices. Furthermore an interesting finding  in figure 14 is that, with a grater population the NE beaconing duration is maximized, confirming thereby,  the results in figure 13. It is worth mentioning how fast and accurate the proposed algorithm is for the convergence to the unique Nash equilibrium, which provides numerous insights on how to implement it and measure its effictiveness.
\begin{figure}
\centering{
\includegraphics [width=9.5cm,  height=5cm]{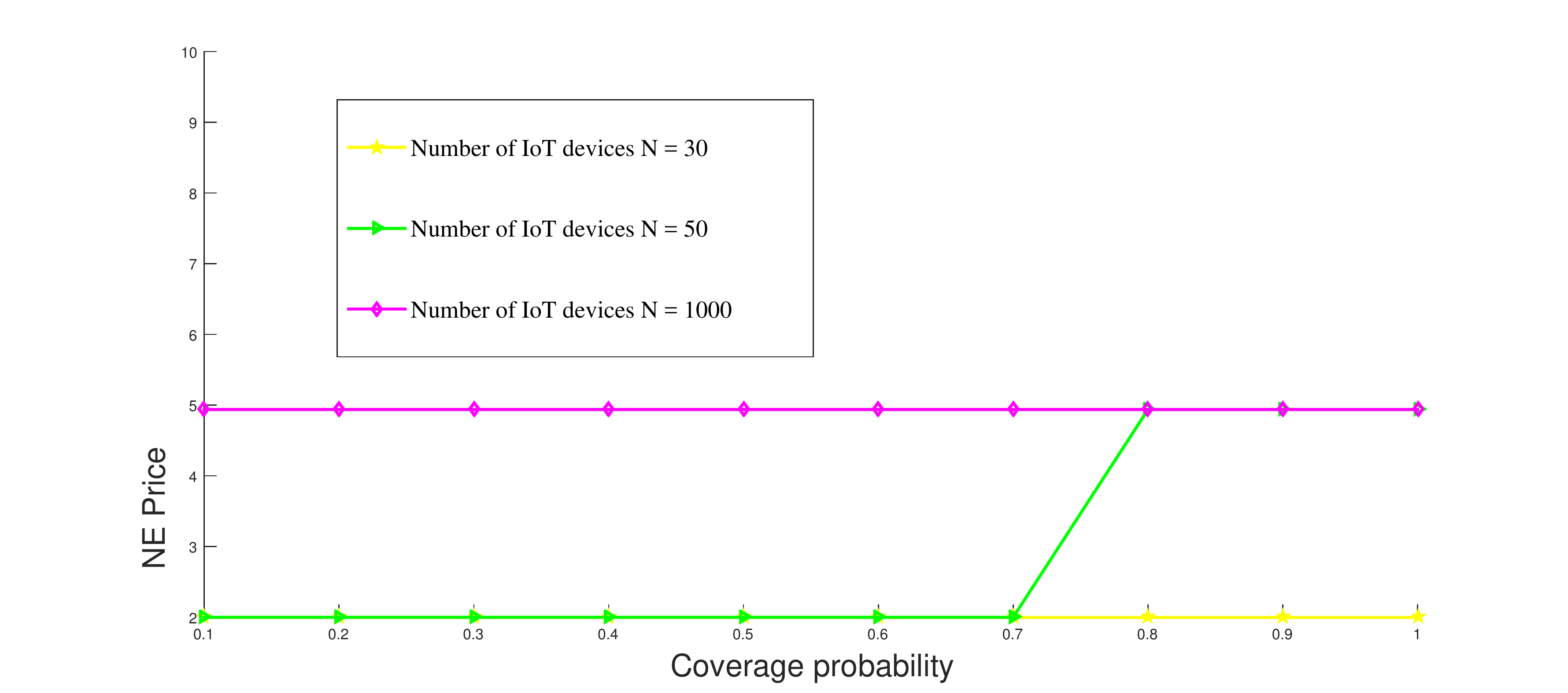}
\caption{NE Price evolution  for operators with coverage probability and different populations sizes}}
\end{figure}
\begin{figure}
\centering{
\includegraphics [width=9.5cm,  height=5cm]{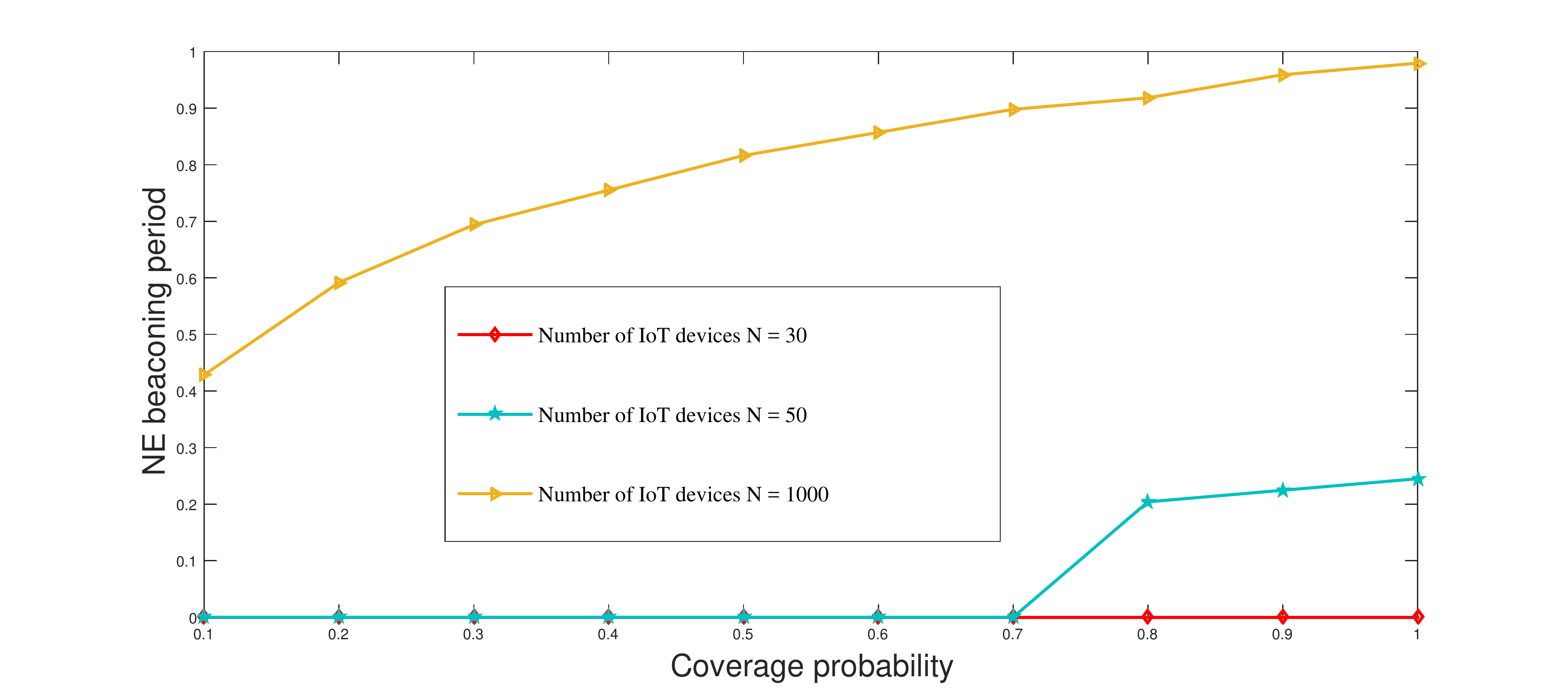}
\caption{NE beaconing periods evolution  for operators with coverage probability and different populations sizes}}
\end{figure}

\section{Conclusion}

In this article, we deal with the Pricing-Availability interaction among adversarial unmanned aerial vehicles acting as flying base stations. we construct a theoretic framework based on non-cooperative game theory and characterize the equilibrium strategies for each UAV, both in terms of equilibrium fees and equilibrium availability probability. Furthermore, a special feature is obtained as the availability game with fixed price is sub-modular while the pricing game is super-modular under fixed availability. Next, we check that a simple iterative best response-based algorithm allows to explore the unique Nash equilibrium of the game. Performance evaluation at equilibrium results allow UAVs service providers to optimize their energy consumption while maximizing their monetary revenues.\\ 

As a future work, we are generalizing our proposal while considering heterogeneous mobility motions. Furthermore, some field experiments are also envisioned.

\section*{Funding}
This work has been conducted within the framework of $Mobicity$ Project funded by The Moroccan Ministry of Higher Education and Scientific Research, and the National Centre for Scientific and Technical Research.

\bibliographystyle{unsrt}

\bibliography{mabiblio}

\end{document}